\numberwithin{theorem}{section}
\newtheorem{assumption}{Assumption}
\numberwithin{assumption}{section}
\newtheorem{property}{Property}
\newtheorem{remark}{Remark}
\newcommand{\TheTitle}{Isogeometric Mortar Coupling\\for Electromagnetic Problems}
\newcommand{\TheTitleShort}{IGA Mortar Coupling for EM Problems} 
\newcommand{\TheAuthors}{A. Buffa, J. Corno, C. de Falco, S. Sch\"ops, R. V\'azquez}
\headers{\TheTitleShort}{\TheAuthors}
\title{{\TheTitle}\thanks{Submitted.
\funding{J. Corno's work is supported by the DFG Grant SCHO1562/3-1 and the "Excellence Initiative" of the German Federal and State Governments and the GSCE at TU Darmstadt.}}}
\author{
  Annalisa Buffa\thanks{ Institute of Mathematics, Ecole Polytechnique F\'ed\'erale de Lausanne, Lausanne, Switzerland (\email{annalisa.buffa@epfl.ch}).}
   \and
   Jacopo Corno\thanks{Graduate School CE and Institut f\"ur Theorie und Electromagnetischer Felder, TU Darmstadt, Darmstadt, Germany
     (\email{[corno,schoeps]@gsc.tu-darmstadt.de}).}
   \and
   Carlo de Falco\thanks{MOX Modeling and Scientific Computing, Politecnico di Milano, Milano, Italy
     (\email{carlo.defalco@polimi.it}).}
   \and
   Sebastian Sch\"ops\footnotemark[3]
   \and
   Rafael V\'azquez\footnotemark[2]\ \thanks{Istituto di Matematica Applicata e Tecnologie Informatiche del CNR, Via Ferrata, 5, 27100 Pavia, Italy (\email{vazquez@imati.cnr.it})}
 }
\pgfplotsset{compat=1.13}
\definecolor{tud1d}{RGB}{36,53,114}
\definecolor{tud3d}{RGB}{0,113,94}
\definecolor{tud6d}{RGB}{174,142,0}
\definecolor{tud8d}{RGB}{169,73,19}
\definecolor{tud10d}{RGB}{115,32,84}
\newacronym{rf}{RF}{Radio Frequency}
\newacronym{pde}{PDE}{Partial Differential Equation}
\newacronym{fe}{FE}{Finite Element}
\newacronym{fem}{FEM}{Finite Element Method}
\newacronym{bem}{BEM}{Boundary Element Method}
\newacronym{iga}{IGA}{IsoGeometric Analysis}
\newacronym{cad}{CAD}{Computer Aided Design}
\newacronym{brep}{B-Rep}{Boundary Representation}
\newacronym{nurbs}{NURBS}{Non-Uniform Rational B-Splines}
\newacronym{mor}{MOR}{Model Order Reduction}
\newacronym{rom}{ROM}{Reduced Order Model}
\newacronym{linac}{linac}{LINear ACcelerator}
\newacronym{tesla}{TESLA}{TeV-Energy Superconducting Linear Accelerator}
\newacronym{ttf}{TTF}{TESLA Test Facility}
\newacronym{ilc}{ILC}{International Linear Collider}
\newacronym{desy}{DESY}{Deutsches Elektronen-Synchrotron}
\newacronym{hom}{HOM}{Higher Order Mode}
\newacronym{homc}{HOMC}{Higher Order Mode Coupler}
\newacronym{te}{TE}{Transverse Electric}
\newacronym{tm}{TM}{Transverse Magnetic}
\newacronym{pec}{PEC}{Perfect Electric Conducting}
\newacronym{pmc}{PMC}{Perfect Magnetic Conducting}
\newacronym{ssc}{SSC}{State Space Concatenation}
\newacronym{cst}{CST\textsuperscript{\textregistered}}{Computer Simulation Technology}
\newacronym{uq}{UQ}{Uncertainty Quantification}
\newacronym{dd}{DD}{Domain Decomposition}
\newacronym{ddm}{DDM}{Domain Decomposition Method} 
\newcommand{\epsvac}{\ensuremath{\varepsilon_0}}
\newcommand{\muvac}{\ensuremath{\mu_0}}
\newcommand{\vect}[1]{\ensuremath{\bm{\mathbf{#1}}}}
\newcommand{\R}{\ensuremath{\mathbb{R}}}
\newcommand{\Ltwo}[1]{\ensuremath{L^{2}(#1)}}
\newcommand{\Ltwovec}[1]{\ensuremath{\mathbf{L}^{2}(#1)}}
\newcommand{\Hdiv}[2]{\ensuremath{\mathbf{H}_{#2}(\mathrm{div};#1)}}
\newcommand{\Hcurl}[2]{\ensuremath{\mathbf{H}_{#2}(\mathbf{curl};#1)}}
\newcommand{\transpose}{^{\top}}
\newcommand{\grad}{\ensuremath{\mathbf{grad} \,}}
\newcommand{\curl}[1]{\ensuremath{\, \mathbf{curl} \, #1}}
\newcommand{\curls}[1]{\ensuremath{\, \mathrm{curl} \, #1}}
\renewcommand{\div}[1]{\ensuremath{\, \mathrm{div} \, #1}}
\newcommand{\gradG}{\ensuremath{\mathbf{grad}_\Gamma \,}}
\newcommand{\curlG}[1]{\ensuremath{\, \mathbf{curl}_\Gamma \, #1}}
\newcommand{\curlsG}[1]{\ensuremath{\, \mathrm{curl}_\Gamma \, #1}}
\newcommand{\divG}[1]{\ensuremath{\, \mathrm{div}_\Gamma \, #1}}
\newcommand{\laplace}{\ensuremath{\Delta}}
\newcommand{\scalar}[2]{\ensuremath{\left(#1,#2\right)}}
\newcommand{\traceh}{\gamma}
\newcommand{\trace}{\vect{\gamma}}
\newcommand{\truxn}{\vect{\gamma}_\perp}
\newcommand{\traceg}[1]{\vect{\gamma}_{#1}}
\newcommand{\truxng}[1]{\vect{\gamma}_{\perp,#1}}
\newcommand{\onref}[1]{\ensuremath{\widehat{#1}}}
\newcommand{\map}{\ensuremath{\mathbf{F}}}
\newcommand{\mapjac}{\ensuremath{D\map}}
\newcommand{\Sonestar}[2]{S_{#1}^{1^*}(#2)}
\newcommand{\Spone}[1]{S_p^1(#1)}
\newcommand{\Sponestar}[1]{\Sonestar{p}{#1}}
\newcommand{\Vfem}[1]{X_h(#1)}
\newcommand{\Ndof}{\ensuremath{N_{\text{DoF}}}}
\newcommand{\Ngamma}{\ensuremath{N_{\Gamma}}}
\newcommand{\betainfsup}{\ensuremath{\beta_{\text{infsup}}}}
\begin{document}
\maketitle

\begin{abstract}
  This paper discusses and analyses two domain decomposition approaches for electromagnetic problems that allow the combination of domains discretised by either N\'ed\'elec-type polynomial finite elements or spline-based isogeometric analysis. The first approach is a new isogeometric mortar method and the second one is based on a modal basis for the Lagrange multiplier space, called state-space concatenation in the engineering literature. Spectral correctness and in particular inf-sup stability of both approaches are analytically and numerically investigated. The new mortar method is shown to be unconditionally stable. Its construction of the discrete Lagrange multiplier space takes advantage of the high continuity of splines, and does not have an analogue for N\'ed\'elec finite elements. On the other hand, the approach with modal basis is easier to implement but relies on application knowledge to ensure stability and correctness.
\end{abstract}

\begin{keywords}
  Domain Decomposition, Isogeometric Analysis, Maxwell Equations, Eigenvalues
\end{keywords}

\begin{AMS}
35Q60, 
49M27, 
65D07,
68Q25,
68R10,
68U05,
78M10
\end{AMS}

\section{Introduction}
\label{sec:intro}

In electrical engineering numerical modelling and simulations have become more and more invaluable in the design process of new devices and components. In particular, we are often interested in electromagnetic devices where the geometry plays an important role in their performance. Given this requirement, we choose to investigate the applicability of \gls{iga} to the simulation of complex electromagnetic structures such as, e.g., \gls{rf} cavities (see Fig.~\ref{fig:tesla_cavity}) as used in particle accelerators. The main building block of IGA for electromagnetic problems are the B-spline spaces with \gls{nurbs} mappings as introduced by Buffa et al. in~\cite{Buffa_2010aa,Buffa_2011aa}. They allow us to parametrise our domain of interest exactly in terms of \gls{cad}, thus avoiding geometrical errors, and grants us a straightforward way to deal with deformations or shape optimisation processes, e.g. \cite{Bontinck_2017ag}. 

However, it is known that, when dealing with complicated structures, one might not always be able to easily construct a volumetric parametrisation as required by \gls{iga}, and no robust automatic tools exist to alleviate the burden of this task. It is then of interest to investigate the possibility of substructuring and coupling domains, possibly using different type of discretisation with each other. A particular example of electromagnetic devices with problematic parametrisation are electrical machines where the rotor part of the domain is rotating with respect to the stator~\cite{Bontinck_2018ac}.

To address those issues, we investigate \gls{ddm} for electromagnetic problems solved with isogeometric methods. In particular, we are interested in methods that allow for the coupling of non-conforming meshes and of different discretisation schemes such as \gls{iga} and the classical N\'ed\'elec-type \gls{fem}. Two methods are considered: a new mortar method, which allows for the coupling of different grids and exploits the inherent properties of the isogeometric basis to naturally define the approximation space for the Lagrange multipliers, and the \gls{ssc} method recently introduced by Flisgen et al.~\cite{Flisgen_2013aa}, which exploits a modal basis on the connecting interfaces instead. \gls{ssc} can also be interpreted as a problem-specific port reduction method \cite{Eftang_2014aa,Iapichino_2012aa}. In the following, we discuss the proper mathematical construction of such basis, its stability, correctness and show numerical simulations of a real world application example.

\subsection{Application to Radio Frequency Cavities}

The motivational application example of this paper is the simulation of \gls{rf} cavities which are used to give energy to the beam in particle accelerators. A resonating electromagnetic field is induced inside these structures in such a way that the field oscillation is synchronous with the passing of the charges and that they experience only an accelerating field~\cite{Wangler_2008aa}.

More specifically, we consider the \gls{tesla} cavity~\cite{Aune_2000aa, TTF_1995}, a 9-cell cavity built in superconducting niobium for the \gls{ilc}, currently in operation at the \gls{desy} facility in Hamburg (see Fig.~\ref{fig:tesla_cavity}). The \gls{tesla} cavity operates with a \gls{tm} standing wave mode at $f_0 = \SI{1.3}{\giga\hertz}$. Very stringent tolerancies are required in production in order for the accelerating frequency to be as close as possible to this value. This high precision is also required from the numerical simulation.

\begin{figure}
\centerline{\includegraphics[width=.8\textwidth]{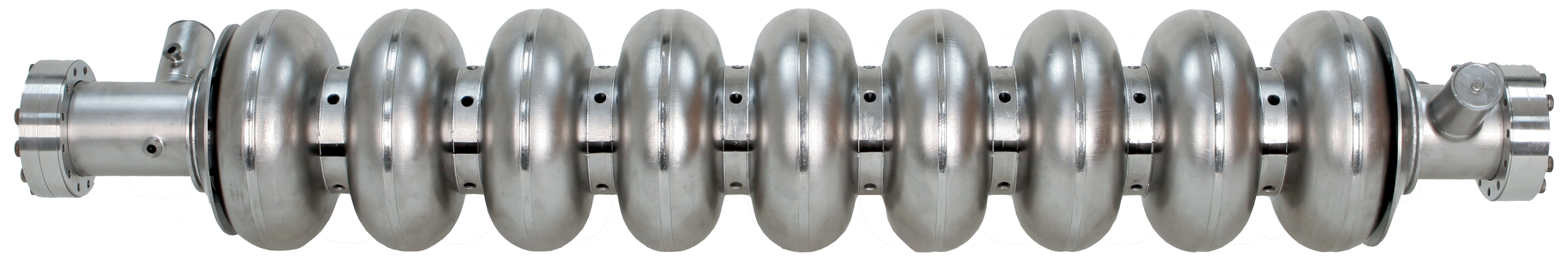}}
\caption{A superconducting \gls{tesla} cavity. (Copyright 2006 \gls{desy})\label{fig:tesla_cavity}}
\end{figure}

However, many cavity simulation codes still rely on 2D axysimmetric \gls{fem}, thus disregarding the 3D effects of the \gls{homc} present at the ends of the cavity (see Fig.~\ref{fig:tesla_cavity}-\ref{fig:homc}). Three dimensional \gls{fem} needs a very high number of elements in order to achieve a sufficient accuracy. The use of \gls{iga} in cavity simulation has been proven to be beneficial both in terms of accuracy and of overall reduction of the computational cost~\cite{Corno_2016aa, CornoThesis}. Moreover, \gls{iga} allows for a better treatment of geometry deformations, e.g. due to Lorentz detuning \cite{Corno_2016aa}. 

The overall goal is to be able to discretise the central part of the cavity with an isogeometric scheme (as this is the region where the geometry description plays a paramount role in the definition of the eigenfrequencies and where the wall deformations occur) and \gls{fe} in the end beampipes, where meshing is most problematic and (possibly curved) tetrahedra allow for the inclusion of the fine details of the \glspl{homc} (see Fig.~\ref{fig:homc}). Furthermore, given the modularity of the geometry, we consider the application of \gls{ddm} also for cell-to-cell coupling.

\medskip

The paper is organised as follows. Maxwell's eigenvalue problem in the cavity is recalled in Section~\ref{sec:maxwellIGA}, along with the \gls{iga} framework used. Two \gls{dd} methods, the mortar method and the \gls{ssc} method, are presented in Section~\ref{sec:substructuring}, followed by a complete analysis of the mortar method in Section~\ref{sec:analysis}, and numerical tests in Section~\ref{sec:results}. Section~\ref{sec:conclusions} draws some conclusions and final remarks.

\begin{figure}
\centerline{\includegraphics[width=.8\textwidth]{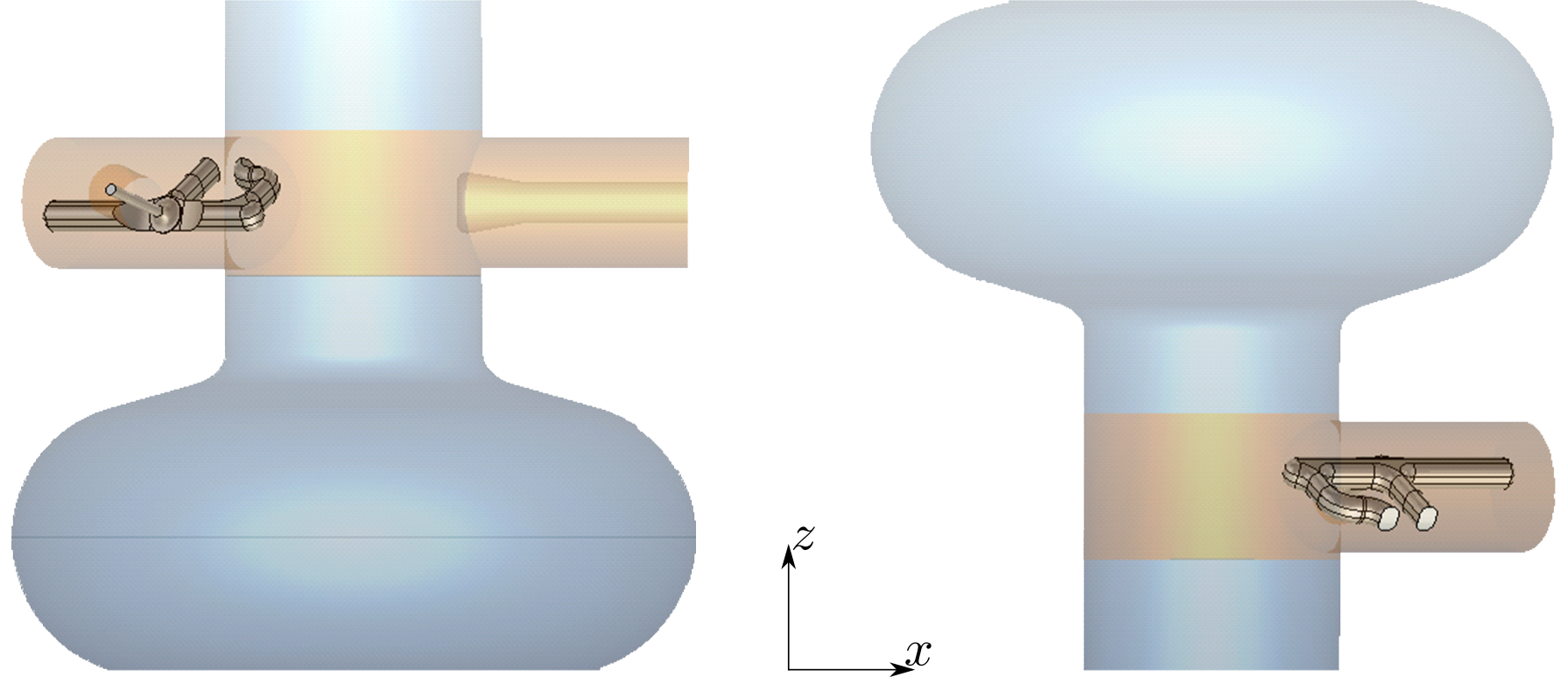}}
\caption{\glspl{homc} of the \gls{tesla} cavity. On the left the downstream coupler, on the right the upstream one.\label{fig:homc} Images are created with CST Microwave Studio \cite{CST_2018aa}}
\end{figure}
 
\section{Isogeometric Analysis for Maxwell's Equations}
\label{sec:maxwellIGA}
This section introduces first notation, Sobolev spaces and trace operators. The Maxwell eigenvalue problem is formulated in its weak form in the second subsection. Although focusing on the eigenvalue problem, all methods can be applied to source problems as well.

\subsection{Sobolev spaces and trace operators} \label{sec:sobolev}
We distinguish between two-di\-men\-sional and three-dimensional domains by using the notation $\Gamma$ and $\Omega$, respectively. We denote by $H^s(\Gamma)$ and $H^s(\Omega)$ the Sobolev spaces of regularity $s \in \mathbb{R}$, and by $\| \cdot \|_{s,\Gamma}$ and $\| \cdot \|_{s,\Omega}$ the corresponding norms. The domain subindices will be removed from the norms when there is no ambiguity. Moreover, assuming that $\Gamma \subset \partial \Omega$, we follow the notation introduced in \cite{Lions-Magenes} and make use of the space $H^{1/2}_{00}(\Gamma)$ of functions such that their extension by zero is in $H^{1/2}(\partial \Omega)$, and denote its dual by $H^{-1/2}_{00}(\Gamma)$. For vector fields, the spaces will be denoted with bold letters, for example, ${\bf H}^s(\Omega)$.

We will also make use of the spaces of vector fields
\begin{equation*}
\begin{aligned}
\Hcurl{\Omega}{} &= \{ \vect{v} \in \Ltwovec{\Omega} : \curl \vect{v} \in \Ltwovec{\Omega} \}, \\
\Hdiv{\Omega}{}  &= \{ \vect{v} \in \Ltwovec{\Omega} : \div \vect{v} \in \Ltwo{\Omega} \}.
\end{aligned}
\end{equation*}
Denoting by $\vect{n}$ the unit normal vector exterior to $\Omega$, for scalar fields we introduce the standard trace operator $\traceh: H^1(\Omega) \longrightarrow H^{1/2}(\partial \Omega)$, while for vector fields in $\Hcurl{\Omega}{}$ we introduce the two trace operators
\begin{equation*}
\begin{aligned}
\trace(\vect{u}) &= \vect{u} - (\vect{u} \cdot\vect{n}) \vect{n} \equiv (\vect{n} \times \vect{u}) \times \vect{n}, \\
\truxn (\vect{u}) &= \vect{u} \times \vect{n},
\end{aligned}
\end{equation*}
and define $\Hcurl{\Omega}{0}$ as the subspace of functions in $\Hcurl{\Omega}{}$ with vanishing trace $\trace$ on the boundary $\partial \Omega$. We recall that the trace operators and differential operators commute, and in particular $\gradG (\traceh \phi) = \trace (\grad \phi)$, see for instance \cite{Buffa_2018}. Finally, for $\Gamma \subset \partial \Omega$ we denote by $\traceh_\Gamma$, $\traceg{\Gamma}$ and $\truxng{\Gamma}$ the restriction of the trace operators to $\Gamma$, and denote by $\Hcurl{\Omega}{\Gamma}$ the functions of $\Hcurl{\Omega}{}$ with vanishing trace on $\Gamma$. In particular, the trace operators $\traceg{\Gamma}$ and $\truxng{\Gamma}$ map the functions in $\Hcurl{\Omega}{}$ into the Sobolev spaces
\begin{align*}
\mathbf{H}^{-1/2}(\mathbf{curl};\Gamma) &= \{ \vect{v} \in \mathbf{H}^{-1/2}(\Gamma) : \curlsG{\vect{v}} \in H^{-1/2}(\Gamma) \}, \\
\mathbf{H}^{-1/2}(\mathbf{div};\Gamma)  &= \{ \vect{v} \in \mathbf{H}^{-1/2}_{00}(\Gamma) : \divG{\vect{v}} \in H^{-1/2}_{00}(\Gamma) \},
\end{align*}
respectively. We will endow these spaces with the usual graph norms, that we respectively denote by $\| \cdot \|_{-1/2,\curls{}}$ and, for simplicity, $\| \cdot \|_{-1/2,\div{}}$.

\subsection{Formulation of Maxwell eigenvalue problem}
Under the assumption of time-harmonicity, the electric field in a \gls{rf} cavity is governed by a second order \gls{pde} in terms of the electric field phasor $\vect{E}$ only
\begin{equation}\label{eq:curl-curl}
\curl{\curl{\vect{E}}} = \epsvac\muvac\omega^2\vect{E},
\end{equation}
where $\epsvac$ and $\muvac$ are the electric permittivity and magnetic permeability of vacuum respectively \cite{Jackson_1998aa}. Equation~\eqref{eq:curl-curl} is an eigenvalue problem, whose solutions are a sequence of eigenmodes $(\omega_n^2,\vect{E}_n)$ which represent the excitable modes in the cavity. In the lossy case, the modal analysis leads to solutions in the complex plane. In this work we focus on the lossless approximation where the cavity walls are considered to be perfect conductors, which is a reasonable assumption for superconducting resonators such as the \gls{tesla} cavity. In this case all solutions are real.

Following~\cite{Boffi_2010aa}, we introduce the weak formulation of Maxwell's eigenproblem~\eqref{eq:curl-curl} as:
\emph{Find $\omega\in\R$ and }$\vect{E}\in\Hcurl{\Omega}{0}$\emph{, with $\vect{E}\neq 0$, such that}
\begin{align}\label{eq:curl-curl-weak}
\scalar{\curl{\vect{E}}}{\curl{\vect{v}}} = \epsvac\muvac\omega^2\scalar{\vect{E}}{\vect{v}} && \forall \vect{v}\in\Hcurl{\Omega}{0}.
\end{align}
In the context of classical \gls{fe} analysis, the numerical approximation of problem~\eqref{eq:curl-curl-weak} requires either some form of stabilisation of the divergence part or the use of the so-called edge elements introduced by N\'ed\'elec~\cite{Nedelec_1980aa}, which have the property of directly satisfying the commuting de Rham diagram~\cite{Monk_2003aa}.

In~\cite{Buffa_2010aa,Buffa_2011aa}, Buffa et al. introduce a sequence of B-spline spaces that also satisfies the de Rham diagram, opening up the possibiliy to apply \gls{iga} to electromagnetic problems. In the remainder of this section, we introduce some notation regarding B-spline spaces and introduce the discretisation scheme used.

\subsection{B-spline Basis Functions}

Given a degree $p$, B-splines are defined from a so-called knot vector
\begin{equation*}
\Xi = \left[\xi_1, \dots, \xi_{n+p+1}\right] \quad 0 \leq \xi_i \leq \xi_{i+1} \leq 1,
\end{equation*}
using the Cox-De Boor recursion formula~\cite{Piegl_1997aa}, where $n$ is the number of functions, see Fig.~\ref{fig:B-spline_basis}. We assume that the knot vector is open, which means that the first and last knots are repeated $p+1$ times, and that all the internal knots are repeated at most $p$ times. We denote by $\onref{B}_i^p$ the $i$-th basis function of degree $p$ on the reference domain $(0,1)$ and define the space of B-spline as
\begin{equation*}
S_p(\Xi) = {\rm span} \left\lbrace \onref{B}_i^p, i=1,\dots,n\right\rbrace.
\end{equation*}
We will denote by $h$ the maximum size of the non-empty elements $(\xi_i,\xi_{i+1})$, and we assume that the mesh is locally quasi-uniform, that is, there exists $\theta\ge 1$ such that the size ratio for two adjacent elements satisfies $\theta^{-1} \le h_i / h_{i+1} \le \theta$.

Multivariate spaces in the reference domain are defined following a tensor product approach, e.g, in 3D, we define the degrees $p_j$, the knot vectors $\Xi_j$ and the integers $n_j$, for $j=1,2,3$, to get the basis functions
\begin{equation*}
\onref{B}_{\vect{i}}^{\vect{p}}(\vect{\xi}) = \onref{B}_{i_1}^{p_1}(\xi_1) \onref{B}_{i_2}^{p_2}(\xi_2) \onref{B}_{i_3}^{p_3}(\xi_3),
\end{equation*}
on $(0,1)^3$. Here we have defined the degree vector $\vect{p}=[p_1, p_2, p_3]$ and the multi-index $\vect{i}\in\mathcal{I}=\lbrace[i_1,i_2,i_3]:1 \leq i_j \leq n_j\rbrace$. The space spanned by multivariate B-spline is denoted by $S_{p_1,p_2,p_3}(\Xi_1,\Xi_2,\Xi_3)$ and its dimension by $N=\prod_j n_j$.

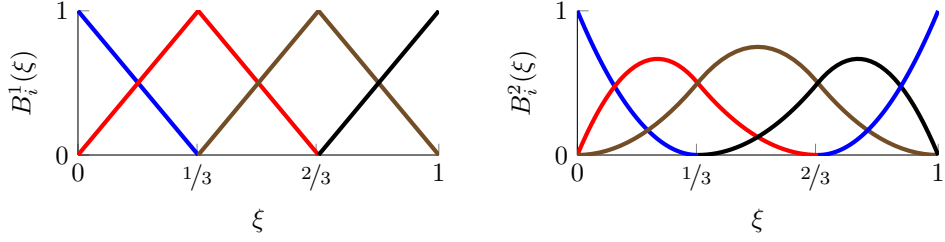
\begin{figure}
\begin{minipage}{.49\textwidth}
\begin{tikzpicture}
	\begin{axis}[axis x line*=bottom,axis y line*=left,width=\linewidth,height=3.5cm,xlabel={$\xi$},ylabel={$B_i^1(\xi)$},xmin=0,xmax=1,ymin=0,ymax=1,xtick={0,0.33,0.66,1},xticklabels={0,$\nicefrac{1}{3}$, $\nicefrac{2}{3}$,1},ytick={0,1}]
		\addplot +[ultra thick] table [mark=none, x=u1, y=B1, col sep=comma] {fig03a.csv};
		\addplot +[ultra thick] table [mark=none, x=u2, y=B2, col sep=comma] {fig03a.csv};
		\addplot +[ultra thick] table [mark=none, x=u3, y=B3, col sep=comma] {fig03a.csv};
		\addplot +[ultra thick] table [mark=none, x=u4, y=B4, col sep=comma] {fig03a.csv};
	\end{axis}
\end{tikzpicture}
\end{minipage}\hfill
\begin{minipage}{.49\textwidth}
\begin{tikzpicture}
	\begin{axis}[axis x line*=bottom,axis y line*=left,width=\linewidth,height=3.5cm,xlabel={$\xi$},ylabel={$B_i^2(\xi)$},xmin=0,xmax=1,ymin=0,ymax=1,xtick={0,0.33,0.66,1},xticklabels={0,$\nicefrac{1}{3}$, $\nicefrac{2}{3}$,1},ytick={0,1}]
		\addplot +[ultra thick] table [mark=none, x=u1, y=B1, col sep=comma] {fig03b.csv};
		\addplot +[ultra thick] table [mark=none, x=u2, y=B2, col sep=comma] {fig03b.csv};
		\addplot +[ultra thick] table [mark=none, x=u3, y=B3, col sep=comma] {fig03b.csv};
		\addplot +[ultra thick] table [mark=none, x=u4, y=B4, col sep=comma] {fig03b.csv};
		\addplot +[ultra thick] table [mark=none, x=u5, y=B5, col sep=comma] {fig03b.csv};
	\end{axis}
\end{tikzpicture}
\end{minipage}
\caption{Example of B-spline basis functions of degree 1 (on the left) and degree 2 (on the right).\label{fig:B-spline_basis}}
\end{figure}

\gls{nurbs} basis functions are defined as rational B-splines as
\begin{equation*}
\onref{N}_{\vect{i}}^{\vect{p}} = \frac{w_{\vect{i}}\onref{B}_{\vect{i}}^{\vect{p}}}{\sum_{j=1}^N w_{\vect{j}}\onref{B}_{\vect{j}}^{\vect{p}}},
\end{equation*}
with $w_{\vect{i}}$ weights associated to each basis function. In the case $w_{\vect{i}}=1\; \forall {\vect{i}}$ we revert to the B-spline case.
\gls{nurbs} geometries are built as a map from the reference domain $\onref{D} = (0,1)^d$ to the physical space by defining a control polyhedron in the physical space. Each control point $\vect{P}_{\vect{i}} \in \R^k$, with $k \ge d$ in the polyhedron is associated to a basis function, defining the map
\begin{equation*}\mathbf{F}(\vect{\xi}) = \sum_{i=1}^N \onref{N}_{\vect{i}}^{\vect{p}} \vect{P}_{\vect{i}}
\end{equation*}
from $\onref{D}$ to $D \subset \mathbb{R}^k$. As already mentioned, the domain will be denoted by $\Gamma$ when $d=2$, and by $\Omega$ when $d=3$.

\subsection{Isogeometric Discretisation}\label{sec:IGA}
We assume that our geometry is defined through a NURBS mapping $\map : \onref{\Omega} =(0,1)^3 \rightarrow \Omega \subset \mathbb{R}^3$, and to prevent singularities in the mapping we assume that the mapping is a bi-Lipschitz homeomorphism. Moreover, let us assume for simplicity that the degree is the same in all directions, i.e., $p = p_1 = p_2 = p_3$. Following~\cite{Buffa_2011aa}, we first define the spline spaces in the reference domain as
\begin{align}
& S^0_p(\onref{\Omega})= S_{p,p,p}(\Xi_1,\Xi_2,\Xi_3)\\
& \Spone{\onref{\Omega}}= S_{p-1,p,p}(\Xi_1',\Xi_2,\Xi_3)\times S_{p,p-1,p}(\Xi_1,\Xi_2',\Xi_3)\times S_{p,p,p-1}(\Xi_1,\Xi_2,\Xi_3'), \label{eq:S1hat}
\end{align}
where $\Xi_j'=[\xi_2^j,\dots,\xi_{n_j+p}^j]$ for $j=1,2,3$ is a modified knot vector with the first and last knot removed. This corresponds to lowering by one the degree and the regularity of the basis functions in the $j$-th direction. Then, considering the mapping $\map$ and its Jacobian $\mapjac$, the spline spaces in the physical domain $\Omega$ are defined by push-forward as
\begin{align}
& S_p^0(\Omega) = \{ u \in H^1(\Omega) : u \circ \map \in S^0_p(\onref{\Omega}) \}, \\
& \Spone{\Omega} = \{ \vect{u} \in \Hcurl{\Omega}{} : \left(\mapjac\right)\transpose\left(\vect{u} \circ \map\right) \in \Spone{\onref{\Omega}} \}. \label{eq:S1}
\end{align}
To deal with boundary conditions, let $\Sigma \subseteq \partial \Omega$, and for simplicity we assume that $\Sigma = {\bf F}(\onref{\Sigma})$, where $\onref{\Sigma} \subseteq \partial \onref{\Omega}$ is the union of some boundary sides of the reference domain. We introduce the discrete spaces with vanishing boundary conditions on $\Sigma \subseteq \partial \Omega$ as
\begin{align*}
& S^0_p(\Omega;\Sigma) = \{ u \in S^0_p(\Omega) : \traceh(u) = 0 \text{ on } \Sigma \}, \\
& \Spone{\Omega;\Sigma} = \{ \vect{u} \in S^1_p(\Omega) : \trace(\vect{u}) = \vect{0} \text{ on } \Sigma \},
\end{align*}
Analogously to Lagrangian and N\'ed\'elec's edge elements, the spline spaces defined above are part of a more general family of isogeometric spaces that form a discrete exact sequence. There also exists a set of commutative projectors such that they conform a commutative de Rham diagram, and it can be proved that the discretisation of the eigenvalue problem~\eqref{eq:curl-curl-weak} with the space $\Spone{\Omega;\partial \Omega}$ is spurious free. For more details we refer to \cite{Buffa_2011aa} and \cite[Ch.~5]{beirao2014}.

\subsection{Isogeometric spaces on a surface}\label{sec:IGA-surface}
For substructuring methods, and in order to deal with discrete spaces defined on the interface between two subdomains, we have to make use of isogeometric spaces defined on a surface. Let $\Gamma \subset \partial \Omega$, and we assume that $\Gamma$ is the image through the parametrisation $\map$ of one boundary side of the reference domain $\onref{\Omega}$. Thus, it can be parametrised as $\map_\Gamma:\onref{\Gamma} \rightarrow \Gamma \subset \mathbb{R}^3$. Applying the trace operators defined in Section~\ref{sec:sobolev}, we obtain
\begin{align}
& \traceg{\Gamma}: \Spone{\Omega} \longrightarrow \Spone{\Gamma}, \label{eq:trace1}\\
& \truxng{\Gamma}: \Spone{\Omega} \longrightarrow \Sponestar{\Gamma}, \label{eq:trace2}
\end{align}
where, similarly to \eqref{eq:S1hat}, we first define the spaces in the parametric domain
\begin{align}
& \Spone{\onref{\Gamma}} = S_{p-1,p}(\Xi_k',\Xi_l)\times S_{p,p-1}(\Xi_k,\Xi_l'), \label{eq:Spone-intref}\\
& \Sponestar{\onref{\Gamma}} = S_{p,p-1}(\Xi_k,\Xi_l')\times S_{p-1,p}(\Xi_k',\Xi_l), \label{eq:Sponestar-intref}
\end{align}
and the indices of the knot vectors $1 \le k < l \le 3$ depend on the chosen boundary side. The trace spaces are then defined analogously to \eqref{eq:S1} as
\begin{align}
& \Spone{\Gamma} = \{ \vect{v} : \mapjac_\Gamma^\top (\vect{v} \circ \map_\Gamma) \in \Spone{\onref{\Gamma}} \}, \label{eq:Spone-int} \\
& \Sponestar{\Gamma} = \{ \vect{v} : \left|\mapjac_\Gamma\right| \mapjac_\Gamma^{+} (\vect{v} \circ \map_\Gamma) \in \Sponestar{\onref{\Gamma}} \}, \label{eq:Sponestar-int}
\end{align}
where $\mapjac_\Gamma$ is the Jacobian matrix of $\map_\Gamma$, with size $2 \times 3$, for which we also define the measure $\left|\mapjac_\Gamma\right| = \sqrt{\text{det}(\mapjac_\Gamma^\top \mapjac_\Gamma)}$, and the Moore-Penrose pseudo-inverse $\mapjac_\Gamma^+= (\mapjac_\Gamma^\top \mapjac_\Gamma)^{-1}\mapjac_\Gamma^\top$.

Analogously to the three-dimensional case, these trace spaces are part of a de Rham diagram. In fact, if we introduce the spaces
\[
S^0_p(\onref{\Gamma}) = S_{p,p}(\Xi_k,\Xi_l), \quad S^2_p(\onref{\Gamma}) = S_{p-1, p-1}(\Xi'_k, \Xi'_l), 
\]
and then apply a push-forward with the usual maps for differential forms, we obtain (see \cite{Buffa_2018} for further details)
\begin{equation}\label{eq:2D-sequence-div}
\begin{tikzcd}
\R \arrow[r] &S^{0}_p({\Gamma}) \arrow[r,"\curlG{}"] &S^{1^*}_p({\Gamma}) \arrow[r,"\divG{}"] &S^{2}_p({\Gamma}) \arrow[r] &0,
\end{tikzcd}
\end{equation}
where the subscript $\Gamma$ is used to denote the surface differential operators. Moreover, we can define spaces with vanishing boundary conditions, and in particular functions in $S^0_p(\Gamma;\partial \Gamma)$ vanish on $\partial \Gamma$, while for $S^1_p(\Gamma;\partial \Gamma)$ the tangential component vanishes. Introducing the notation $S^2_p(\Gamma; \partial \Gamma) = S^2_p(\Gamma)$, we also have the exact sequence
\begin{equation} \label{eq:2D-sequence-curl}
\begin{tikzcd}
0 \arrow[r] &S^{0}_p({\Gamma};\partial {\Gamma}) \arrow[r,"{\gradG}"] &S^{1}_p({\Gamma}; \partial {\Gamma}) \arrow[r,"\curlsG{}"] &S^{2}_p({\Gamma};\partial {\Gamma}) \arrow[r] &\R.
\end{tikzcd}
\end{equation}
Obviously, the construction of the spaces on the surface can be generalised to arbitrary degree.

\section{Substructuring}
\label{sec:substructuring}

In this section we present two instances of domain decomposition methods. The aim is twofold. Given the typical structure of \gls{rf} cavities, it is desirable to obtain a substructuring method able to exploit the modularity of the design in order to speed up matrix assembly and eventually reduce memory consumption. On the other hand, we are interested in the flexibility of coupling different discretisations across different domains. A classical Galerkin approximation cannot be straightforwardly applied, since the discrete space consists of discontinuous functions across the connecting interface, hence it is not a subset of $\Hcurl{\Omega}{}$ anymore. The approach is to give a weak formulation of~\eqref{eq:curl-curl} compatible with the independent definition of the finite dimensional spaces on each subdomain, with the addition of a weak coupling condition for the tangential fields across the interfaces. We will refer in the following only to the eigenvalue problem introduced and its weak formulation~\eqref{eq:curl-curl-weak}, although the methods proposed can be applied to source problems as well.

\subsection{The general setting}

Let us assume that the domain is decomposed in $N_{\text{dom}}$ non-overlapping subdomains, such that $\overline \Omega = \bigcup_{i=1}^{N_{\text{dom}}} \overline \Omega_i$. Let us denote $\Gamma_{ij} = \overline{\Omega}_i\cap\overline{\Omega}_j$ for $i<j$, such that $\Omega_i$ will play the role of the \emph{slave subdomain}, and by $\vect{n}_{\Gamma_{ij}}$, or $\vect{n}_\Gamma$ to simplify notation, the unit normal vector to $\Gamma_{ij}$ pointing outward $\Omega_i$ (see Fig.~\ref{fig:two_domains}). We also introduce $\vect{\mathcal{I}}$ the collection of couples $(i,j)$ with $i < j$ such that $\Gamma_{ij}$ is not empty. Moreover, for each $\Omega_s$ we set $\Sigma_s = \partial \Omega \cap \partial \Omega_s$. On each subdomain we introduce the space $V_s = \Hcurl{\Omega_s}{\Sigma_s}$, and define $V = \prod_{s=1}^{N_{\text{dom}}} V_s$. Moreover, for each interface $\Gamma_{ij}$ we introduce the space $M_{ij} = \{\curl \vect{v} \times \vect{n}_{\Gamma_{ij}} : \vect{v} \in V_i\}$, and define $M = \prod_{(i,j) \in \vect{\mathcal{I}}} M_{ij}$. With this notation, the tangential continuity of the electromagnetic field across the interfaces can be enforced weakly by means of a Lagrange multiplier, and the Maxwell eigenvalue problem \eqref{eq:curl-curl-weak} is equivalent to the mixed variational problem: Find $\vect{E} \in V$, $\vect{\lambda} \in M$ and $\omega \in \mathbb{R}^+$ such that
\begin{equation} \label{eq:mixed-cont}
\begin{aligned}
\displaystyle a(\vect{E}, \vect{v}) + b(\vect{v}, \vect{\lambda}) &= \epsilon_0 \mu_0 \omega^2 \sum_{s=1}^{N_\text{dom}} (\vect{E}, \vect{v})_{\Omega_s} && \quad \forall \vect{v} \in V \\
\displaystyle b(\vect{E}, \vect{\mu}) &= 0 && \quad \forall \vect{\mu} \in M,
\end{aligned}
\end{equation}
where we define the bilinear forms
\begin{align*}
& a(\vect{E}, \vect{v}) := \sum_{s=1}^{N_\text{dom}} (\curl \vect{E}, \curl \vect{v})_{\Omega_s}, \\
& b(\vect{v}, \vect{\mu}) := \sum_{(i,j) \in \vect{\mathcal{I}}} ([\traceg{\Gamma_{ij}}(\vect{v})], \vect{\mu})_{\Gamma_{ij}} = \sum_{(i,j) \in \vect{\mathcal{I}}} ([(\vect{n}_{\Gamma_{ij}} \times \vect{v}) \times \vect{n}_{\Gamma_{ij}}], \vect{\mu})_{\Gamma_{ij}},
\end{align*}
and the brackets $[\cdot]$ denote the jump across the interface $\Gamma_{ij}$. We note that in the continuous setting $\vect{\lambda} = \curl \vect{E} \times \vect{n}_\Gamma$, and the tangential components of $\vect{E}$ are continuous.
\begin{figure}
	\centering
	\begin{tikzpicture}
		\node[anchor=south west,inner sep=0] at (0,0) {\includegraphics[width=0.5\textwidth]{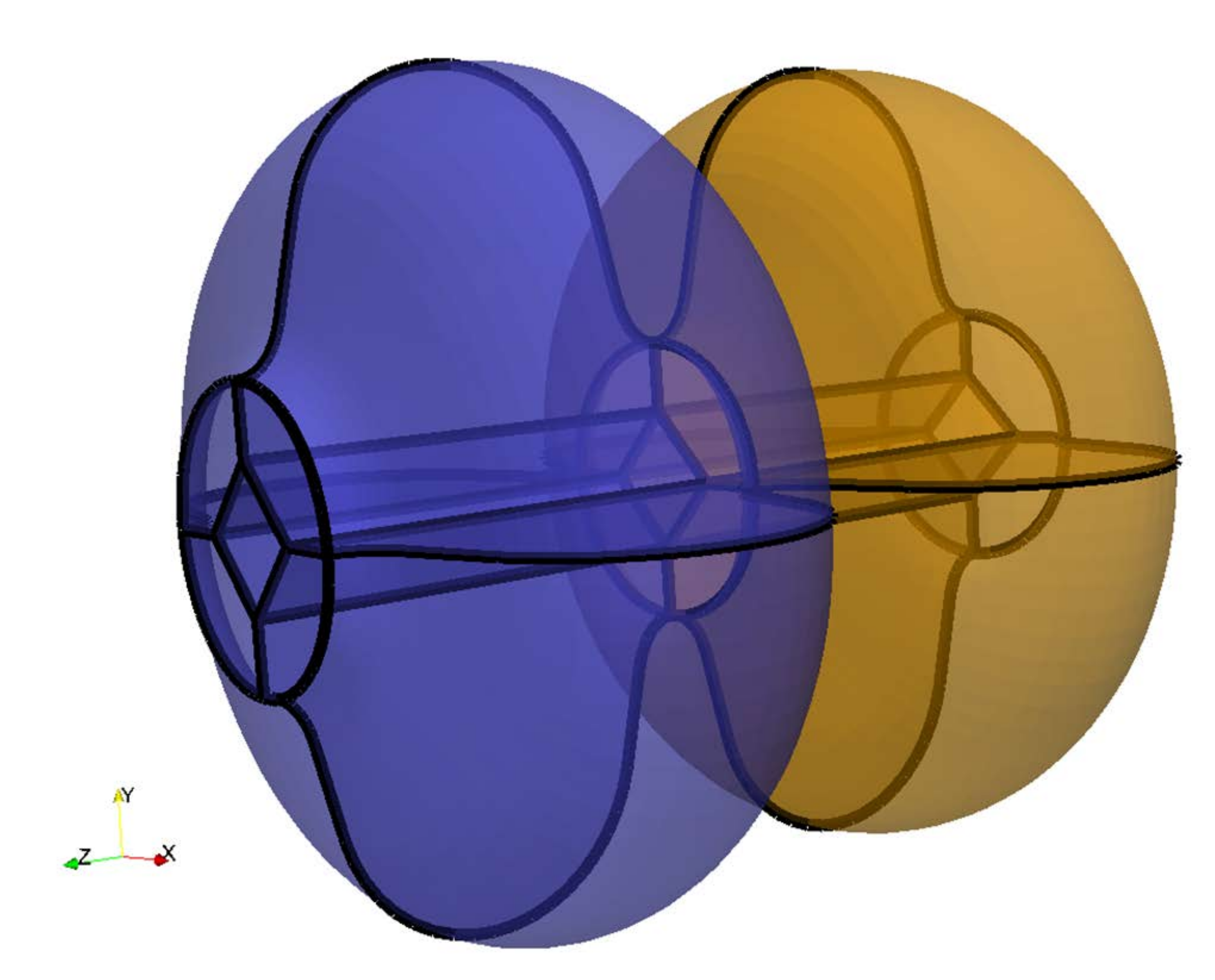}};
		\draw[<-, very thick] (3.8,3) -- (6,0.5);
		\node[anchor=west] at (6,0.5) {$\Gamma_{16}$};
		\draw[<-, very thick] (5.2,3.8) -- (7,2.5);
		\node[anchor=west] at (7,2.5) {$\Omega_6$};
		\draw[<-, very thick] (2,3.5) -- (0.5,4.5);
		\node[anchor=east] at (0.5,4.5) {$\Omega_1$};
	\end{tikzpicture}
	\caption{Schematics of substructuring into two subdomains (blue and yellow). The two subdomains consist of several patches
$\Omega_i$ with $i=1,\ldots,5$ and $\Omega_j$ with $j=6,\ldots,10$, respectively. These patches are glued to each other by the multipatch approach. Lagrange multipliers are used across the subdomain interface, e.g. on $\Gamma_{16}$.\label{fig:two_domains}}
\end{figure}

For the discretisation of \eqref{eq:mixed-cont}, let us assume that $N_{\text{dom}} = N_{\text{IGA}} + N_{\text{FEM}}$, and that the first $N_{\text{IGA}}$ of subdomains are defined through a \gls{nurbs} parametrisation and carry an \gls{iga} discretisation, while the last $N_{\text{FEM}}$ ones carry a \gls{fem} discretisation. More precisely, for the \gls{iga} subdomains, i.e. for $s=1, \ldots, N_{\text{IGA}}$, we define the spaces $V_{s,h} = \Spone{\Omega_s;\Sigma_s}$, and with some abuse of notation we let $\Spone{\Omega_s;\Sigma_s} = \Spone{\Omega_s}$ when $\Sigma_s$ is empty. Similarly, for the \gls{fem} subdomains, i.e. for $ s = N_{\text{IGA}} +1, \ldots N_{\text{dom}}$, we denote by $V_{s,h} = \Vfem{\Omega_s;\Sigma_s}$ the corresponding spaces of N\'ed\'elec finite elements with vanishing tangential component on $\Sigma_s$. Then, we define the discrete space as $V_h = \prod_{s=1}^{N_{\text{dom}}} V_{s,h}$. Moreover, for every non-empty interface $\Gamma_{ij}$ we assume for simplicity that the slave subdomain $\Omega_i$ carries an \gls{iga} discretisation, that is, for every couple $(i,j) \in \vect{\mathcal{I}}$, it holds $1 \le i \le N_\text{IGA}$. Then, the Lagrange multiplier is approximated in the discrete space $M_h = \prod_{(i,j) \in \vect{\mathcal{I}}} M_{ij,h}$, where the discrete space on each interface $M_{ij,h}$ will depend on the chosen method, and will be detailed in the next subsections. After we have defined the approximation spaces, we write the discrete weak formulation of the problem: Find $\vect{E}_h \in V_h$, $\vect{\lambda}_h \in M_h$ and $\omega_h \in \mathbb{R}^+$ such that
\begin{equation}\label{eq:saddle-point-weak}
\begin{aligned}
\displaystyle a(\vect{E}_h, \vect{v}_h) + b(\vect{v}_h, \vect{\lambda}_h) &= \epsilon_0 \mu_0 \omega_h^2 \sum_{s=1}^{N_\text{dom}} (\vect{E}_h, \vect{v}_h)_{\Omega_s} && \quad \forall \vect{v}_h \in V_h \\
\displaystyle b(\vect{E}_h, \vect{\mu}_h) &= 0 && \quad \forall \vect{\mu} \in M_h.
\end{aligned}
\end{equation}

\subsection{The State Space Concatenation Method}\label{sec:ssc}
This section discusses the \gls{ssc} method introduced by Flisgen et al.~\cite{Flisgen_2013aa,Flisgen_2015aa} and its properties with respect to standard \gls{dd} methods. It was proposed for the simulation of long chains of resonant cavities as they are present in particle accelerators. In this case, the connections between each resonator is a short waveguide with a circular or rectangular cross-section. The field in these interconnecting parts is assumed to resemble that of a waveguide (the longer the connection, the more this assumption is reasonable) and \gls{ssc} aims at exploiting this a-priori knowledge to choose the Lagrange multiplier. Furthermore, \gls{ssc} applies model order reduction to reduce the dimension of $V_h$ in order to allow for the simulation of very large structures. However, we will only focus here on the domain decomposition aspect.

Although the method arises from physical considerations, the technique proposed can be considered in a more general setting as a domain decomposition method where a modal basis is chosen for the Lagrange multipliers, see the related method in \cite{Deparis_2018aa}. We assume that the interface $\Gamma$ is a simply connected planar surface such that $\partial \Gamma \subset \Sigma$, that in general can be made by several pieces, in the form $\overline{\Gamma} = \bigcup_{(i,j) \in \vect{\mathcal{I}}_\Gamma} \overline \Gamma_{ij}$ with $\vect{\mathcal{I}}_\Gamma \subset \vect{\mathcal{I}}$. We also assume that the interface $\Gamma$ is perpendicular to the boundary, in the sense that, denoting by $\vect{n}$ the unit normal vector exterior to $\Omega$, it holds $\vect{n}_\Gamma \cdot \vect{n} = \vect{0}$. The Lagrangian basis is obtained from the modes that can be excited in a waveguide with $\Gamma$ as the cross-section. These are the \gls{tm} and \gls{te} modes, \cite[Chapter 8.3]{Jackson_1998aa}. Assuming for simplicity of notation $\mathbf{n}_\Gamma=\mathbf{n}_z$, then they are given, up to a multiplicative constant, by $\gradG E_z$ and $\curlG H_z$, where $E_z$ and $H_z$ are solutions of the eigenvalue problems
\begin{equation}\label{eq:eigenmodes-gamma}
\left\{
\begin{aligned}
\Delta_\Gamma E_z + \gamma^2 E_z &= 0 && \text{in } \Gamma\\
E_z &= 0 && \text{on } \partial\Gamma,
\end{aligned}
\right.
\quad
\left\{
\begin{aligned}
\Delta_\Gamma H_z + \gamma^2 H_z &= 0 && \text{in } \Gamma\\
\frac{\partial H_z}{\partial \vect{n}} &= 0 && \text{on } \partial\Gamma,
\end{aligned}
\right.
\end{equation}
with the transverse Laplace operator $\Delta_\Gamma v(x,y) = \laplace v(x,y) - \frac{\partial^2}{\partial z^2} v(x,y)$ on the interface $\Gamma$. Assuming that the boundary of $\Gamma$ is smooth enough, the modes $ \vect{\varphi}_k $, i.e., either $\gradG E_{z,k}$ or $\curlG H_{z,k}$, can be obtained as the numerical or closed-form solution of the eigenvalue problems on $\Gamma$. 
The equations~\eqref{eq:eigenmodes-gamma} have an infinite number of solutions which constitute an orthogonal set of basis functions that can be sorted in ascending order according to their separation constants ($\gamma_1 \leq \gamma_2 \leq \gamma_3 \leq \dots$) and form a basis of $\vect{L}^2(\Gamma)$. This means that for all $\vect{\varphi} \in \vect{L}^2(\Gamma)$ there exist $\{\alpha_k\}_{k \in \mathbb{N}}$ such that $\vect{\varphi} = \sum_{k} \alpha_k \vect{\varphi}_k$.

To obtain the saddle-point formulation \eqref{eq:saddle-point-weak} we introduce a new variable $\vect{\varphi} \in \vect{L}^2(\Gamma)$ and consider the problem on each subdomain:
\begin{equation}\label{eq:ssc-one-subd}
\begin{aligned}
\curl{\left(\muvac^{-1}\curl{\vect{E}_s}\right)} &= \omega^2\epsvac\vect{E}_s    &&\text{in } \Omega_s\\
\vect{E}_s\times\vect{n} &= 0       &&\text{on }\partial\Omega\cap\partial\Omega_s\\
\left(\muvac^{-1}\curl{\vect{E}_s}\right) \times \vect{n}_\Gamma &= \vect{\varphi} = \sum_{k} \alpha_k \vect{\varphi}_k &&\text{on } \Gamma,
\end{aligned}
\end{equation}
that is, we impose on $\Gamma$ a superposition of the waveguide modes as the Neumann data. By translating Eq.~\eqref{eq:ssc-one-subd} in the weak sense and by adding the weak continuity of the solution across $\Gamma$ we obtain \eqref{eq:mixed-cont}.

The discrete space $M_h = \text{span}\lbrace\vect{\varphi}_k, k=1,\dots,\Ngamma\rbrace$ is obtained by truncating the series, selecting the first $\Ngamma<\infty$ modes and we arrive at \eqref{eq:saddle-point-weak}. If the interface is placed in correspondence of a waveguide-like section, it is possible to exploit physical knowledge on the dispersion relation to estimate the number of modes required to obtain a sufficiently good approximation~\cite{Flisgen_2013aa}. However, in general many modes may be necessary for an accurate representation, which may endanger stability.
 
\subsection{The Mortar Method}\label{sec:mortar}
We now propose the new isogeometric mortar method, where we approximate the Lagrange multiplier with a suitable space of splines. To define $M_h$ we will define the space $M_{ij,h}$ associated to each interface separately. By our assumptions, we know that for each interface $\Gamma_{ij}$ the slave subdomain $\Omega_i$ is discretised with \gls{iga}, and $\Gamma_{ij}$ is a full mapped face of that subdomain. Moreover, as we have seen in \eqref{eq:trace1}, the tangential trace operator $\traceg{\Gamma_{ij}}$ maps the isogeometric discrete space defined in $\Omega_i$, $\Spone{\Omega_i;\Sigma_i}$, into the space $\Spone{\Gamma_{ij}}$, and more precisely onto $\Spone{\Gamma_{ij}; \partial \Gamma_{ij} \cap \partial \Omega}$, the space with boundary conditions.

The approximation of the Lagrange multiplier $\vect{\lambda} = \curl \vect{E} \times \vect{n}$ requires a discrete space of divergence conforming splines, and such that the pairing with the image of $\traceg{\Gamma_{ij}}$ satisfies an inf-sup stability condition on $b(\cdot, \cdot)$. A natural candidate would be the discrete space $\Sponestar{\Gamma_{ij}}$, which is the image of the trace operator $\truxng{\Gamma_{ij}}$. Unfortunately, this choice does not satisfy the inf-sup stability condition. 

In our mortar method we choose an analogous space with a different degree. More precisely, introducing the degree $1 \le q < p$, we define the space $M_{ij,h} = \Sonestar{q}{\Gamma_{ij}}$ analogously to \eqref{eq:Sponestar-intref}-\eqref{eq:Sponestar-int}, removing only $p-q$ repetitions of the first and last knots. Notice that the definition requires the following assumption, that implies that $\Spone{\Omega_i;\Sigma_i}$ consists of continuous functions (both the normal and tangential components) when restricted to $\Omega_i$.
\begin{assumption}\label{as:regularity}
The degree satisfies $p \ge 2$, and all the internal knots of $\Xi_k$, for $k = 1, 2, 3$, are repeated at most $q$ times, with $1 \le q < p$.
\end{assumption}

The analysis of the mortar method, that we develop in Section~\ref{sec:analysis}, reveals that our choice of the discrete space is stable for $q = p-1$. Moreover, the numerical tests in Section~\ref{sec:results} show stability for $q = p-k$ when $k$ is odd, while it is unstable when $k$ is even.

\begin{remark}
The construction of the discrete space takes advantage of the high continuity of splines, and does not have an analogue for N\'ed\'elec finite elements. Indeed, Assumption~\ref{as:regularity} implies that both the tangential and the normal components of the functions are continuous, which is not the case in \gls{fem}.
\end{remark}
 
\subsection{Discussion of the two approaches}\label{sec:mortar_vs_ssc}
It is worth noticing that, with respect to the mortar method, the \gls{ssc} has the advantage that the computation of the coupling terms $b(\vect{v}_h, \vect{\mu}_h)$  is completely independent on each side since the Lagrange multipliers $\vect{\varphi}_k$ on the interfaces live on spaces independent of the volume discretisation. When dealing with the coupling of non-conforming meshes across $\Gamma$, the mortar method requires the construction of a common mesh given by the intersection of the meshes on the two sides. In the \gls{ssc} case, however, the coupling matrices can be straightforwardly assembled on completely different meshes. Furthermore, for accelerator cavities with common shapes, the intersection surface $\Gamma$ is usually sufficiently simple that even closed-form solutions for the eigenmodes can be used. However, spectral correctness and inf-sup stability of \gls{ssc} are not guaranteed as we will show in the example section. On the other hand, we show these properties for our mortar method in the next section. Another important difference is that for the \gls{ssc} case the interface must be perpendicular to the boundary, and its boundary must be contained on the boundary of the domain, while the mortar method can deal with arbitrary interfaces.

\section{Analysis of the mortar method}
\label{sec:analysis}

In this section we analyse the mortar method introduced in Section~\ref{sec:mortar}. We start by introducing the conditions for the spectral correctness of any method written in the general setting \eqref{eq:saddle-point-weak}.

\subsection{Conditions for spectral correctness}\label{sec:correctness}
In order to analyse the spectral correctness of the method, we need to define the space of discrete functions satisfying the weak continuity condition, namely
\[
V_{h,M} := \{ \vect{u}_h \in V_h : b(\vect{u}_h,\vect{\mu}_h) = 0 \quad \forall \vect{\mu}_h \in M_h \}.
\]
We also define the discrete kernel as the subspace
\begin{equation}
K_{h,M} := \{ \vect{u}_h \in V_{h,M} : a(\vect{u}_h, \vect{v}_h) = 0 \quad \forall \vect{v}_h \in V_{h,M} \}, \label{eq:KHM}
\end{equation}
and denote by $W_{h,M} = K_{h,M}^\perp$ the orthogonal space to $K_{h,M}$ with respect to the $L^2$ product.

Several (necessary and sufficient) conditions have to be checked to ensure that the solution of \eqref{eq:saddle-point-weak} provides a spectrally correct approximation of \eqref{eq:mixed-cont}:
\begin{property}[Inf-sup stability] \label{property:inf-sup}
There exists a constant $\beta > 0$ and $h_0 > 0$ such that for $h < h_0$ it holds
\begin{equation}
\sup_{\vect{u}_h \in V_h} \frac{b(\vect{u}_h,\vect{\mu}_h)}{\left( \sum_{s=1}^{N_\text{dom}} \| \vect{u}_h\|^2_{0,\curl,\Omega_s} \right)^{1/2} } \ge \beta \bigg( \sum_{(i,j) \in \vect{\mathcal{I}}}\| \vect{\mu}_h \|^2_{-1/2,\div{},\Gamma_{ij}} \bigg)^{1/2} \; \forall \vect{\mu}_h \in M_h. \label{eq:inf-sup}
\end{equation}
\end{property}

\begin{property}[Completeness of the discrete kernel] \label{property:completeness}
For any $\phi \in H^1_0(\Omega)$, we have that
\begin{equation}
\lim_{h\rightarrow 0} \inf_{\vect{u}_h \in K_{h,M} } \| \grad \phi - \vect{u}_h\|_{0} = 0. \label{eq:CDK}
\end{equation}
\end{property}

\begin{property}[Gap property, or discrete compactness (see \cite{Boffi_2010aa} or \cite{Buffa05})] \label{property:gap}
There exist constants $\sigma, C > 0$ such that, for every $\vect{w}_h \in W_{h,M}$ there exists $\vect{w} \in \Hcurl{\Omega}{0} \cap \Hdiv{\Omega}{}$ with $\div \vect{w} = 0$ such that 
\begin{equation*}
\| \vect{w}_h - \vect{w} \|_0 \le C h^\sigma \| \vect{w} \|_{0,\curl{}}. \label{eq:GAP}
\end{equation*}
\end{property}

\subsection{Properties of the mortar method}
In what follows we prove Properties~\ref{property:inf-sup} and~\ref{property:completeness} for the mortar method, while the proof of the gap property, which is more intricate, is left for further studies.

For simplicity, in the following we restrict ourselves to the case of two non-overlapping subdomains, $\overline \Omega = \overline \Omega_1 \cup \overline \Omega_2$ with a common interface $\Gamma = \Gamma_{12} = \partial \Omega_1 \cap \partial \Omega_2$, and assume that both subdomains are discretised with \gls{iga}. The extension to the case of $\Omega_2$ discretised with \gls{fem} is straightforward, and only affects the proof of Proposition~\ref{prop:CDK}. From now on, we assume that $q=p-1$, that is, the Lagrange multiplier belongs to the space $\Sonestar{p-1}{\Gamma}$.

The proofs rely on two known results: the Helmholtz decomposition and the gap property, both at the level of the interface.
\begin{lemma}[Helmholtz decomposition]
The two following Helmholtz decompositions hold:
\[
S^1_p(\Gamma;\partial \Gamma) = Z^1_p \oplus \gradG{} (S^0_p(\Gamma;\partial \Gamma)), \qquad
S^{1^*}_q(\Gamma) = Z^{1^*}_q \oplus \curlG{} (S^0_q(\Gamma) \cap L^2_0(\Gamma)),
\]
with 
\begin{align*}
Z^1_p &= \{ \vect{u} \in S^1_p(\Gamma;\partial \Gamma) : (\vect{u}, \gradG{} \phi) = 0 \quad \forall \phi \in S^0_p(\Gamma;\partial \Gamma) \}, \\
Z^{1^*}_q &= \{ \vect{v} \in S^{1^*}_q(\Gamma) : (\vect{v}, \curlG{} \psi) = 0 \quad \forall \psi \in S^0_q(\Gamma) \},
\end{align*}
and $L^2_0(\Gamma)$ the $L^2$ functions with zero average value.
\end{lemma}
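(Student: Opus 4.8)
The plan is to recognise both identities as instances of the elementary orthogonal decomposition of a finite-dimensional inner product space: for any subspace $W$ of a finite-dimensional space $U$ carrying an inner product, one has $U = W^\perp \oplus W$, where $W^\perp$ denotes the $L^2$-orthogonal complement and the sum is direct. Since the pushforward map is bi-Lipschitz, the $L^2(\Gamma)$ pairing is a genuine (positive definite) inner product on the vector-valued spline spaces, so this abstract fact applies verbatim. The real content of the lemma is therefore only to identify the two factors correctly, using the surface de Rham sequences \eqref{eq:2D-sequence-curl} and \eqref{eq:2D-sequence-div}.

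\emph{First decomposition.} By the exactness of \eqref{eq:2D-sequence-curl}, the operator $\gradG{}$ maps $S^0_p(\Gamma;\partial\Gamma)$ into $S^1_p(\Gamma;\partial\Gamma)$, so that $W := \gradG{}(S^0_p(\Gamma;\partial\Gamma))$ is a genuine subspace of $S^1_p(\Gamma;\partial\Gamma)$. By its very definition, $Z^1_p$ is the set of elements of $S^1_p(\Gamma;\partial\Gamma)$ that are $L^2$-orthogonal to every $\gradG{}\phi$ with $\phi \in S^0_p(\Gamma;\partial\Gamma)$, i.e. $Z^1_p = W^\perp$ inside $S^1_p(\Gamma;\partial\Gamma)$. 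The first decomposition follows at once from the abstract fact above.

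\emph{Second decomposition.} Likewise, by \eqref{eq:2D-sequence-div} (with $p$ replaced by $q$, which is legitimate since the surface construction generalises to arbitrary degree) the operator $\curlG{}$ maps $S^0_q(\Gamma)$ into $S^{1^*}_q(\Gamma)$. The only point requiring care is the mismatch between the space $S^0_q(\Gamma)\cap L^2_0(\Gamma)$ appearing in the decomposition and the full space $S^0_q(\Gamma)$ used in the definition of $Z^{1^*}_q$. To reconcile them I would invoke exactness of \eqref{eq:2D-sequence-div} at $S^0_q(\Gamma)$, namely that the kernel of $\curlG{}$ consists exactly of the constants. Writing any $\psi \in S^0_q(\Gamma)$ as $\psi = \bar\psi + (\psi - \bar\psi)$ with $\bar\psi$ its average, one has $\psi - \bar\psi \in S^0_q(\Gamma)\cap L^2_0(\Gamma)$ and $\curlG{}\psi = \curlG{}(\psi - \bar\psi)$, whence $\curlG{}(S^0_q(\Gamma)) = \curlG{}(S^0_q(\Gamma)\cap L^2_0(\Gamma))$. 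Consequently $Z^{1^*}_q$ is precisely the $L^2$-orthogonal complement of $W' := \curlG{}(S^0_q(\Gamma)\cap L^2_0(\Gamma))$ in $S^{1^*}_q(\Gamma)$, and the second decomposition follows as before.

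The argument presents no genuine obstacle; it is essentially a restatement of the orthogonal-complement splitting combined with the de Rham structure of the trace spaces. The single delicate point---the one I would state explicitly---is the reduction $\curlG{}(S^0_q(\Gamma)) = \curlG{}(S^0_q(\Gamma)\cap L^2_0(\Gamma))$, which rests on the fact that constants lie in $\ker\curlG{}$. As a byproduct one may observe that $\gradG{}$ is injective on $S^0_p(\Gamma;\partial\Gamma)$ and $\curlG{}$ is injective on $S^0_q(\Gamma)\cap L^2_0(\Gamma)$ (the constants having been removed), so the potential generating the gradient, respectively curl, component of each field is unique; this is what justifies calling the two identities genuine Helmholtz decompositions rather than mere orthogonal splittings.
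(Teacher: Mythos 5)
Your proof is correct, and it takes a genuinely more elementary route than the paper's. The paper disposes of this lemma in one sentence, as ``a consequence of the commuting projectors defined in \cite{Buffa_2011aa}''; you instead observe that, once the surface de Rham structure \eqref{eq:2D-sequence-curl}--\eqref{eq:2D-sequence-div} is in hand (the paper establishes it in Section~\ref{sec:IGA-surface}, citing \cite{Buffa_2018}), both identities are instances of the finite-dimensional orthogonal splitting $U = W \oplus W^{\perp}$, because $Z^1_p$ and $Z^{1^*}_q$ are by definition the $L^2$-orthogonal complements of the gradient and curl images inside $S^1_p(\Gamma;\partial\Gamma)$ and $S^{1^*}_q(\Gamma)$. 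The only step with real content is then the reduction $\curlG{}(S^0_q(\Gamma)) = \curlG{}\big(S^0_q(\Gamma)\cap L^2_0(\Gamma)\big)$, which you justify correctly from the facts that constants belong to $S^0_q(\Gamma)$ and are annihilated by $\curlG{}$. What your route buys is transparency and self-containedness: the decomposition as literally stated needs only the complex property of the sequences, not the projector machinery. What the paper's citation buys is the deeper information that the projectors encode, namely exactness --- that $\ker\gradG{}$ is trivial on $S^0_p(\Gamma;\partial\Gamma)$ and that $\ker\curlG{}$ consists exactly of the constants --- which is precisely what makes the potentials $\phi$ and $\psi$ unique; you record this as a byproduct, and it is indeed what is used downstream, e.g.\ in the norm equivalences of the inf-sup proof of Proposition~\ref{prop:inf-sup}. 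So the two arguments rest on the same foundation, but yours makes explicit that the lemma itself is nearly tautological given the discrete de Rham structure, while the paper's one-line proof implicitly leans on the machinery that establishes that structure.
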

\begin{proof}
The result is a consequence of the commuting projectors defined in \cite{Buffa_2011aa}.
\end{proof}
\begin{lemma}[Gap property on the interface]\label{lemma:gap}
There exists a positive constant $\sigma > 0$ such that, for each $\vect{u}_h \in Z^1_p$,
there exists a function $\vect{w} \in \Hcurl{\Gamma}{0}$ with $(\vect{w}, \gradG{\phi}) = 0$ for all $\phi \in H^1_0(\Gamma)$, satisfying
\[
\curlsG \vect{u}_h = \curlsG \vect{w}, \quad \| \vect{u}_h - \vect{w} \|_0 \lesssim h^\sigma \| \curlsG \vect{u}_h \|_{-1/2}.
\]
Similarly, for each $\vect{v}_h \in Z^{1^*}_q$,
there exists a function $\vect{z} \in \Hdiv{\Gamma}{}$ 
with $(\vect{z}, \curlG \psi) = 0$ for all $\psi \in H^1(\Gamma)$, satisfying
\[
\divG \vect{v}_h = \divG \vect{z}, \quad \| \vect{v}_h - \vect{z} \|_0 \lesssim h^\sigma \| \divG \vect{v}_h \|_{-1/2}.
\]
Moreover, $(\vect{w}, \vect{z}) = 0$.
\end{lemma}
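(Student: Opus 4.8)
The statement of Lemma~\ref{lemma:gap} is really two parallel gap-type results, one for each Helmholtz component, plus an orthogonality assertion. My plan is to treat the two halves symmetrically, exploiting the fact that the two exact sequences~\eqref{eq:2D-sequence-div} and~\eqref{eq:2D-sequence-curl} on the surface $\Gamma$ are mirror images of one another under the rotation $\mathbf{curl}_\Gamma \leftrightarrow \mathbf{grad}_\Gamma$, $\mathrm{div}_\Gamma \leftrightarrow \mathrm{curl}_\Gamma$. Thus it suffices to carry out the argument carefully for, say, the $Z^1_p$ statement, and then transcribe it to $Z^{1^*}_q$ with the roles of the scalar potential and its dual interchanged.

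\textbf{First half ($\vect{u}_h \in Z^1_p$).}
First I would note that for $\vect{u}_h \in Z^1_p \subset S^1_p(\Gamma;\partial\Gamma)$, the surface scalar curl $\curlsG{\vect{u}_h}$ lies in $S^2_p(\Gamma;\partial\Gamma)$ by the exactness of~\eqref{eq:2D-sequence-curl}. The target function $\vect{w}$ should have exactly the same scalar curl while being free of any gradient part in the continuous Helmholtz sense; that is, $\vect{w}$ is the genuinely solenoidal (gradient-orthogonal) representative of the given curl datum in $\Hcurl{\Gamma}{0}$. The natural way to produce it is to solve, at the continuous level, the two-dimensional vector problem asking for $\vect{w} \in \Hcurl{\Gamma}{0}$ with prescribed $\curlsG{\vect{w}} = \curlsG{\vect{u}_h}$ and $(\vect{w}, \gradG{\phi}) = 0$ for all $\phi \in H^1_0(\Gamma)$; existence and uniqueness follow from the continuous Helmholtz decomposition on $\Gamma$. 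The error estimate $\| \vect{u}_h - \vect{w} \|_0 \lesssim h^\sigma \| \curlsG{\vect{u}_h} \|_{-1/2}$ is then precisely a discrete-compactness/gap estimate for the surface spline de Rham complex. I would obtain it by invoking the commuting projectors of~\cite{Buffa_2011aa} together with an approximation estimate for $\vect{w}$ in the space $S^1_p(\Gamma;\partial\Gamma)$: because $\vect{u}_h$ and the spline interpolant of $\vect{w}$ share the same curl, their difference is a discrete gradient, which is then controlled by the gradient-orthogonality of $\vect{u}_h$ in $Z^1_p$ and the approximation properties of the complex. The exponent $\sigma$ reflects the limited Sobolev regularity of $\vect{w}$ dictated by the (possibly non-smooth) geometry of $\Gamma$.

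\textbf{Second half and orthogonality.}
For $\vect{v}_h \in Z^{1^*}_q$ I would run the dual argument through~\eqref{eq:2D-sequence-div}: $\divG{\vect{v}_h} \in S^2_q(\Gamma)$, and I construct $\vect{z} \in \Hdiv{\Gamma}{}$ with the same surface divergence and orthogonal to all continuous surface curls, again solving a continuous saddle-point problem and then transferring the discrete estimate via the commuting projectors. The final claim $(\vect{w}, \vect{z}) = 0$ is where the two pieces must be made to talk to each other: $\vect{w}$ is gradient-free and $\vect{z}$ is curl-free in the continuous Helmholtz sense, so one is (up to the relevant boundary conditions) a continuous surface curl and the other a continuous surface gradient, and their $L^2$ inner product vanishes by the very orthogonality built into the continuous Helmholtz decomposition on $\Gamma$. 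The main obstacle I anticipate is not the orthogonality but pinning down the error estimate with the correct \emph{negative} norm $\|\cdot\|_{-1/2}$ on the right-hand side rather than a full $L^2$ norm: this sharpening is what makes the estimate usable later in the inf-sup analysis, and it requires a duality/Aubin--Nitsche-type argument on $\Gamma$ combined with the fractional regularity of the continuous lift $\vect{w}$ (respectively $\vect{z}$). Care is also needed because $\vect{u}_h$ lives at degree $p$ while $\vect{v}_h$ lives at the reduced degree $q=p-1$, so the two approximation estimates draw on different spline spaces even though their structure is identical.
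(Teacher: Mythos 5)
Your outline is correct and follows essentially the same route as the paper: the paper's proof is little more than a citation of the three-dimensional analogue (Lemma~6.1 of \cite{Buffa_2011aa}, adapted to the interface, together with the regularity results in \cite{HIP02a} and \cite{Hiptmair-Schwab}), and your construction --- continuous Helmholtz lift with prescribed $\curlsG{}$ (resp.\ $\divG{}$), commuting projectors plus discrete exactness, the orthogonality of $Z^1_p$ (resp.\ $Z^{1^*}_q$) to discrete gradients (resp.\ curls), and fractional regularity of the lift to obtain the $\|\cdot\|_{-1/2}$ norm on the right-hand side --- is precisely the content of that cited argument. Your justification of $(\vect{w},\vect{z})=0$ also coincides with the paper's, namely that $\divG{\vect{w}}=0$ and $\curlsG{\vect{z}}=0$ make the two lifts land in complementary, mutually orthogonal components of the continuous Helmholtz decomposition.
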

\begin{proof}
The proof of the existence, and of the two inequalities, is an adaption to the two-dimensional case of Lemma~6.1 in \cite{Buffa_2011aa}, using in particular \cite[Theorem~4.1]{HIP02a} and \cite[Lemma~2.3]{Hiptmair-Schwab} (see also \cite[Lemma~6.2]{Hiptmair-Schwab}).
The orthogonality is a consequence of Helmholtz decomposition, because $\divG \vect{w} = 0$ and $\curlsG \vect{z} = 0$.
\end{proof}
As a consequence of the gap property, we have the following discrete Friedrichs' inequalities.
\begin{corollary}
For ${\bf u}_h \in Z_p^1$ and for ${\bf v}_h \in Z_q^{1^*}$, it holds
\begin{equation*}
\| {\bf u}_h \|_0 \lesssim \| \curlsG{{\bf u}_h}  \|_{-1/2} , \qquad \| {\bf v}_h \|_0 \lesssim \| \divG{{\bf v}_h} \|_{-1/2}.
\end{equation*}
\end{corollary}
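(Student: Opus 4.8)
The plan is to read both inequalities off the gap property (Lemma~\ref{lemma:gap}) by pairing it with a \emph{continuous} Friedrichs inequality on the interface and a single triangle inequality. I would treat the first estimate, $\| \vect{u}_h \|_0 \lesssim \| \curlsG{\vect{u}_h} \|_{-1/2}$, in full; the second is obtained verbatim by exchanging the roles of $\curlsG{}$ and $\divG{}$, of $Z^1_p$ and $Z^{1^*}_q$, and of the lifts $\vect{w}$ and $\vect{z}$ furnished by the two halves of Lemma~\ref{lemma:gap}.

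First, for $\vect{u}_h \in Z^1_p$ I would invoke Lemma~\ref{lemma:gap} to obtain $\vect{w} \in \Hcurl{\Gamma}{0}$ with $(\vect{w}, \gradG \phi) = 0$ for all $\phi \in H^1_0(\Gamma)$, satisfying $\curlsG{\vect{w}} = \curlsG{\vect{u}_h}$ and $\| \vect{u}_h - \vect{w} \|_0 \lesssim h^\sigma \| \curlsG{\vect{u}_h} \|_{-1/2}$. The triangle inequality then gives $\| \vect{u}_h \|_0 \le \| \vect{u}_h - \vect{w} \|_0 + \| \vect{w} \|_0$, whose first summand is already controlled by the gap estimate.

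It then remains to bound the continuous lift $\| \vect{w} \|_0$ by $\| \curlsG{\vect{w}} \|_{-1/2}$. This is a continuous Friedrichs inequality: being $L^2$-orthogonal to all surface gradients, $\vect{w}$ is divergence-free ($\divG{\vect{w}} = 0$), so on the (simply connected) interface it is a rotated scalar potential, $\vect{w} = \curlG{\psi}$, with $\curlsG{\vect{w}} = -\Delta_\Gamma \psi$. Elliptic regularity for the Laplacian on $\Gamma$ — available thanks to the smoothness of $\partial \Gamma$ already assumed in Lemma~\ref{lemma:gap} — provides the one-step Sobolev gain $H^{-1/2} \to H^{3/2}$, whence $\| \vect{w} \|_0 \lesssim \| \curlsG{\vect{w}} \|_{-1/2}$. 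Using $\curlsG{\vect{w}} = \curlsG{\vect{u}_h}$ and combining with the gap estimate yields $\| \vect{u}_h \|_0 \lesssim (1 + h^\sigma)\| \curlsG{\vect{u}_h} \|_{-1/2}$; since $h \le h_0$ the prefactor is bounded, so the resulting constant is independent of $h$, as required.

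The crux is to produce the continuous inequality in the \emph{weak} $H^{-1/2}$ norm rather than with the $L^2$ norm of $\curlsG{\vect{w}}$: only the former is strong enough to close the argument, which is why the genuine one-derivative gain of elliptic regularity is needed and not merely a standard Poincar\'e--Friedrichs bound. The whole estimate has no continuous analogue — the embedding $L^2 \hookrightarrow H^{-1/2}$ is not bounded below — and its uniformity in $h$ hinges entirely on the factor $h^\sigma$ delivered by the gap property, which absorbs the $L^2$ defect $\| \vect{u}_h - \vect{w} \|_0$ between the discrete field and its continuous lift.
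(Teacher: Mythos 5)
Your overall skeleton --- lift $\vect{u}_h$ to $\vect{w}$ via Lemma~\ref{lemma:gap}, split by the triangle inequality, and control the lift by a \emph{continuous} Friedrichs inequality --- is exactly the derivation the paper intends (the corollary is stated there without proof, as ``a consequence of the gap property''), and your reduction of the second estimate to the first by symmetry is fine. The gap-property step and the absorption of the $(1+h^\sigma)$ factor for $h \le h_0$ are also correct.

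The flaw is in how you justify the continuous inequality $\| \vect{w} \|_0 \lesssim \| \curlsG{\vect{w}} \|_{-1/2}$. You invoke the elliptic shift $H^{-1/2} \to H^{3/2}$ ``thanks to the smoothness of $\partial \Gamma$ already assumed in Lemma~\ref{lemma:gap}'', but no such assumption is made there: smoothness of $\partial\Gamma$ is only invoked in the SSC section (for the waveguide modes), not in the mortar analysis. In the mortar setting $\Gamma$ is the image of the unit square under a bi-Lipschitz NURBS map, so $\partial\Gamma$ has corners, and the one-derivative shift theorem for the Laplacian (moreover, for the \emph{Neumann} problem: the vanishing tangential trace of $\vect{w}$ translates into $\partial\psi/\partial n = 0$ for your potential, a condition you never track) is a delicate endpoint result on such domains, not something you may quote for free. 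Fortunately, the inequality needs no regularity theory at all. Write $\vect{w} = \curlG{\psi}$ with $\psi$ normalized to zero mean; integrating by parts, the boundary term vanishes because $\traceg{\Gamma}(\vect{w}) = \vect{0}$, and by duality and the Poincar\'e inequality
\[
\| \vect{w} \|_0^2 \;=\; \int_\Gamma \curlG{\psi} \cdot \vect{w} \;=\; \int_\Gamma \psi \, \curlsG{\vect{w}} \;\le\; \| \psi \|_{1/2} \, \| \curlsG{\vect{w}} \|_{-1/2} \;\lesssim\; \| \gradG \psi \|_{0} \, \| \curlsG{\vect{w}} \|_{-1/2},
\]
and since $\| \gradG \psi \|_0 = \| \curlG{\psi} \|_0 = \| \vect{w} \|_0$, dividing by $\| \vect{w} \|_0$ gives the claim with a constant depending only on $\Gamma$. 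The mirrored argument (with $\vect{z} = \gradG \chi$, $\chi \in H^1_0(\Gamma)$ since the orthogonality to all curls of $H^1(\Gamma)$ functions also kills the tangential trace of $\vect{z}$, and duality against $H^{1/2}_{00}(\Gamma)$) settles the divergence case. With this replacement your proof is complete and matches the intended one.
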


\subsubsection{Proof of the inf-sup condition}
The inf-sup condition \eqref{eq:inf-sup} is a consequence of the following proposition, in particular \eqref{eq:inf-sup1}, and the continuity of the trace operators.
\begin{proposition}[inf-sup condition]\label{prop:inf-sup}
Let the spaces $S_p^i(\Gamma;\partial \Gamma)$ for $i=0,1,2$ be defined as in \eqref{eq:2D-sequence-curl}, starting from the knot vectors $\Xi_1, \Xi_2$, and let Assumption~\ref{as:regularity} hold. Let the spaces $S_{p-1}^i(\Gamma)$ for $i = 0, 1^*, 2$ be defined as in \eqref{eq:2D-sequence-div}, starting from knot vectors $\Xi_1', \Xi_2'$. Then, there exist $\beta_0, \beta_1, \beta_2 > 0$ and $h_0 > 0$ such that the following inf-sup conditions hold for $s \in [0,1]$ and for all $h < h_0$:
\begin{subequations}
\begin{align}
\sup_{u \in S_p^0(\Gamma;\partial \Gamma)} \frac{\int_\Gamma u v}{\|u\|_{H^s}} &\ge \beta_0 \|v\|_{H^{-s}} &&\quad \forall {v \in S_{p-1}^2(\Gamma)}, \label{eq:inf-sup0}\\
\sup_{\vect{u} \in \Spone{\Gamma;\partial \Gamma}} \frac{\int_\Gamma \vect{u} \cdot \vect{v}}{\|\vect{u}\|_{-1/2 , \curls{}}} &\ge \beta_1 \|\vect{v}\|_{-1/2,\div{}} &&\quad \forall {\vect{v} \in \Sonestar{p-1}{\Gamma}}, \label{eq:inf-sup1} \\
\sup_{u \in S_p^2(\Gamma;\partial \Gamma)} \frac{\int_\Gamma u v}{\|u\|_{H^{-s}}} &\ge \beta_2 \|v\|_{H^s} &&\quad \forall {v \in S_{p-1}^0(\Gamma)}. \label{eq:inf-sup2}
\end{align}
\end{subequations}
\end{proposition}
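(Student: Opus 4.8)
The plan is to establish the three estimates \eqref{eq:inf-sup0}--\eqref{eq:inf-sup2} in two stages: first the two genuinely scalar conditions \eqref{eq:inf-sup0} and \eqref{eq:inf-sup2}, by a Fortin/duality argument for B-splines, and then the vector estimate \eqref{eq:inf-sup1}, by combining them with the Helmholtz decomposition and the gap property on the interface. The unifying structural observation is that in each pairing the primal space (degree $p$, with boundary conditions) and the dual space (degree $p-1$) have \emph{equal} dimension and are linked through the commuting de Rham diagrams \eqref{eq:2D-sequence-curl}--\eqref{eq:2D-sequence-div}: the relevant $L^2$-pairing matrices are therefore square, and proving inf-sup stability amounts to showing that they are uniformly well conditioned in the fractional norms at hand. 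Note in particular that $S^2_p(\Gamma;\partial\Gamma)$ and $S^0_{p-1}(\Gamma)$ coincide as spaces, so \eqref{eq:inf-sup2} is a statement that the $L^2$ product realises the $H^s$--$H^{-s}$ duality on a fixed spline space up to an $h$-independent constant.

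For the two scalar conditions I would construct a single Fortin operator $\Pi$ that is stable in both $L^2$ and $H^1$, so that by interpolation $\|\Pi w\|_{H^s}\lesssim\|w\|_{H^s}$ for all $s\in[0,1]$ with an $h$-independent constant. Concretely, $\Pi$ maps into the primal space $S^0_p(\Gamma;\partial\Gamma)$ (resp.\ $S^2_p(\Gamma;\partial\Gamma)$) and preserves the pairing against the dual test space, $\int_\Gamma(\Pi w-w)\,v=0$ for all $v\in S^2_{p-1}(\Gamma)$ (resp.\ $S^0_{p-1}(\Gamma)$). Such operators are available from the locally supported dual bases and commuting projectors for B-splines of \cite{Buffa_2011aa}, and the tensor-product structure of the surface spaces reduces their construction and stability to the univariate case. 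Given $\Pi$, the inf-sup constant is bounded below by $1/\|\Pi\|$ through the Fortin chain $\sup_{u}\int_\Gamma u v/\|u\|_{H^s}\ge\sup_{w}\int_\Gamma(\Pi w)v/\|\Pi w\|_{H^s}\ge\|v\|_{H^{-s}}/\|\Pi\|$; condition \eqref{eq:inf-sup2}, carrying the negative norm on the trial side, is the dual counterpart and follows by the adjoint of this argument. The delicate value $s=1/2$ is handled in the $H^{1/2}_{00}$ duality, which is exactly why the boundary conditions are imposed on the primal spaces.

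For \eqref{eq:inf-sup1} I would use the two Helmholtz decompositions. Writing $\vect{v}=\vect{z}+\curlG{\psi}\in\Sonestar{p-1}{\Gamma}$ with $\vect{z}\in Z^{1^*}_{p-1}$ and $\psi\in S^0_{p-1}(\Gamma)\cap L^2_0(\Gamma)$, and noting $\divG{\vect{v}}=\divG{\vect{z}}\in S^2_{p-1}(\Gamma)$, the divergence part is treated by \eqref{eq:inf-sup0} at $s=1/2$: it yields $\phi\in S^0_p(\Gamma;\partial\Gamma)$ with $\int_\Gamma\phi\,\divG{\vect{z}}\gtrsim\|\phi\|_{1/2}\,\|\divG{\vect{z}}\|_{-1/2}$, so that the test field $\gradG\phi$ satisfies $\int_\Gamma\gradG\phi\cdot\vect{v}=-\int_\Gamma\phi\,\divG{\vect{z}}$ (the boundary term vanishing by the condition on $\phi$), while $\|\gradG\phi\|_{-1/2,\curls{}}=\|\gradG\phi\|_{-1/2}\lesssim\|\phi\|_{1/2}$ since $\curlsG{\gradG\phi}=0$. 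The complementary, rotation-dominated contribution is treated symmetrically: testing with $\vect{u}_0\in Z^1_p$ and integrating by parts gives $\int_\Gamma\vect{u}_0\cdot\curlG{\psi}=\int_\Gamma\curlsG{\vect{u}_0}\,\psi$, where $\curlsG{\vect{u}_0}$ is chosen via \eqref{eq:inf-sup2} to pair optimally with $\psi$, and the discrete Friedrichs inequality of the Corollary bounds $\|\vect{u}_0\|_0$ (hence $\|\vect{u}_0\|_{-1/2,\curls{}}$) by $\|\curlsG{\vect{u}_0}\|_{-1/2}$. Because the cross term $\int_\Gamma\gradG\phi\cdot\curlG{\psi}$ vanishes, the two fields combine into a single $\vect{u}=\gradG\phi+\vect{u}_0\in\Spone{\Gamma;\partial\Gamma}$ realising the inf-sup, using $\|\vect{v}\|^2_{-1/2,\div{}}\approx\|\vect{v}\|^2_{-1/2}+\|\divG{\vect{z}}\|^2_{-1/2}$ together with the Friedrichs bound $\|\vect{z}\|_{-1/2}\lesssim\|\divG{\vect{z}}\|_{-1/2}$.

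The main obstacle is this final assembly: the Helmholtz splitting is $L^2$-orthogonal but \emph{not} orthogonal in the fractional $\mathbf{H}^{-1/2}$ norms, so controlling $\|\vect{v}\|_{-1/2}$ and the various pairings by the split components with an $h$-independent constant is not automatic. This is exactly where the gap property (Lemma~\ref{lemma:gap}) is indispensable: it lets me replace the discrete fields $\vect{z}\in Z^{1^*}_{p-1}$ and $\vect{u}_0\in Z^1_p$ by genuine $\Hdiv{\Gamma}{}$ and $\Hcurl{\Gamma}{0}$ functions, up to an $O(h^\sigma)$ error controlled by the same lower-order norms, so the integration by parts and duality pairings can be carried out at the continuous level and the residual absorbed for $h<h_0$. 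A secondary difficulty is the bookkeeping of boundary conditions in the Fortin construction, so that the $s=1/2$ estimate is taken in the $H^{1/2}_{00}$ duality rather than in $H^{1/2}$; the equality of dimensions of the primal-with-boundary and dual-without-boundary spaces is what makes this consistent.
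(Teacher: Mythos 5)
Your proposal follows essentially the same route as the paper's proof: the scalar conditions \eqref{eq:inf-sup0} and \eqref{eq:inf-sup2} are obtained via Fortin operators built from the commuting spline projectors (exactly what the paper invokes by citing \cite{Brivadis_2015aa} and \cite{Antolin_2018}), and the vector condition \eqref{eq:inf-sup1} is assembled, as in the paper, from the discrete Helmholtz decompositions on both sides, the two scalar pairings for the gradient/rotational components, the discrete Friedrichs inequalities for the norm equivalences, and the gap property of Lemma~\ref{lemma:gap} to reduce the non-orthogonal cross term to an $O(h^\sigma)$ contribution absorbed for $h<h_0$. Both the decomposition and the key lemmas coincide with the paper's argument, so this is the same proof, merely with the cited scalar inf-sup results sketched rather than quoted.
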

\begin{proof}
The condition \eqref{eq:inf-sup0} was already proved in \cite[Thm.~12]{Brivadis_2015aa} for $s=0$ and in \cite[Thm.~3.6]{Antolin_2018} for $s \in [0,1]$. The inf-sup condition \eqref{eq:inf-sup2} for $s=0$ is trivial, since both spaces are the same. The condition for $s \in (0,1]$ is proved as in \cite{Antolin_2018}, defining a Fortin operator with the help of the commutative projectors in \cite{Buffa_2011aa}, noting that since the two spaces are equal, we can exchange their roles in the inf-sup condition.

Now let $\vect{v} \in \Sonestar{p-1}{\Gamma}$, that using Helmholtz decomposition we can write as $\vect{v} = \curlG \psi + \vect{v}_{\bot}$ with $\psi \in S^0_{p-1}({\Gamma}) \cap L^2_0({\Gamma})$ and $\vect{v}_\bot \in Z^{1^*}_{p-1}$. We have to find some $\vect{u} \in \Spone{\Gamma;\partial \Gamma}$ such that the inequality holds. First, we apply Helmholtz decomposition to write $\vect{u} = \gradG{\phi} + \vect{u}_\bot$, with $\phi \in S^0_p({\Gamma};\partial {\Gamma})$ and $\vect{u}_\bot \in Z^1_p$. Thanks to the discrete Friedrichs' inequalities, and the zero average value for $\psi$, we can work with the equivalent norms
\[
\| \vect{v} \|_{-1/2,\div{}} \simeq \| \psi\|_{1/2} + \| \divG{\vect{v}_\bot}\|_{-1/2} , \qquad
\| \vect{u} \|_{-1/2,\curls{}} \simeq \| \phi\|_{1/2} + \| \curlsG{\vect{u}_\bot}\|_{-1/2}.
\]
It is readily seen that
\[
\int_{{\Gamma}} \vect{u} \cdot \vect{v} = 
- \int_{{\Gamma}} \phi \divG \vect{v}_\bot + \int_{{\Gamma}} \psi \curlsG \vect{u}_\bot + \int_{{\Gamma}} \vect{u}_\bot \cdot \vect{v}_\bot,
\]
and thanks to Lemma~\ref{lemma:gap}, there exist $\sigma > 0$, $\vect{z} \in \Hdiv{{\Gamma}}{}$ and $\vect{w} \in \Hcurl{{\Gamma}}{0}$ such that
\[
\begin{array}{c}
\displaystyle
\left | \int_{{\Gamma}} \vect{u}_\bot \cdot \vect{v}_\bot \right | = \left| \int_{{\Gamma}} (\vect{u}_\bot - \vect{w}) \cdot \vect{z} + (\vect{v}_\bot - \vect{z}) \cdot \vect{w} + (\vect{u}_\bot - \vect{w}) \cdot (\vect{v}_\bot - \vect{z}) \right| \\
\displaystyle
\lesssim h^\sigma \| \curlsG{\vect{u}_\bot}\|_{-1/2}\| \divG{\vect{v}_\bot}\|_{-1/2}.
\end{array}
\]
Using \eqref{eq:inf-sup0} and \eqref{eq:inf-sup2}, we can choose $\phi \in S_p^0({\Gamma};\partial {\Gamma})$ and $\vect{u}_\bot \in S_p^1({\Gamma};\partial {\Gamma})$ such that
\[
\int_{{\Gamma}} \phi \divG{\vect{v}_\bot} \ge \beta_0 \| \phi \|_{1/2} \| \divG{\vect{v}_\bot} \|_{-1/2}, \quad 
\int_{{\Gamma}} \psi \curlsG{\vect{u}_\bot} \ge \beta_2 \| \curlsG{\vect{u}_\bot} \|_{-1/2} \| \psi \|_{1/2}.
\]
Moreover, the inf-sup condition makes these choices continuous, and we have
\[
\| \phi \|_{1/2} \simeq \| \divG{\vect{v}_\bot} \|_{-1/2}, \qquad \| \curlsG{\vect{u}_\bot} \|_{-1/2} \simeq \| \psi \|_{1/2},
\]
and gathering these last four results, we obtain \eqref{eq:inf-sup1}.
\end{proof}

\subsubsection{Completeness of the discrete kernel}
\begin{proposition} \label{prop:CDK}
Let $M_h = \Sonestar{p-1}{\Gamma}$, and $K_{h,M}$ be defined as in \eqref{eq:KHM}. Then Property~\ref{property:completeness} holds.
\end{proposition}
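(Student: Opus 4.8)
The plan is to realize the discrete kernel through discrete gradients and then to reduce the vector approximation \eqref{eq:CDK} to a scalar mortar approximation problem. The first observation is that if $\phi_h$ belongs to the broken scalar spline space $\prod_s S^0_p(\Omega_s;\Sigma_s)$, then $\grad \phi_h$ has vanishing curl on each subdomain, so $a(\grad \phi_h, \vect{v}_h) = 0$ for every $\vect{v}_h \in V_h$; moreover, since $\phi_h$ vanishes on $\Sigma_s$ so does its tangential gradient, whence $\grad \phi_h \in V_h$. Therefore $\grad \phi_h$ lies in $K_{h,M}$ as soon as it satisfies the weak continuity $b(\grad \phi_h, \vect{\mu}_h) = 0$ for all $\vect{\mu}_h \in \Sonestar{p-1}{\Gamma}$.

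The second step turns this vector constraint into a scalar one. Writing $g_h = [\traceh_\Gamma \phi_h]$ for the jump of the scalar trace across $\Gamma$ and using the commutation $\traceg{\Gamma}(\grad \phi_h) = \gradG (\traceh_\Gamma \phi_h)$, surface integration by parts gives
\[
b(\grad \phi_h, \vect{\mu}_h) = \int_\Gamma \gradG g_h \cdot \vect{\mu}_h = - \int_\Gamma g_h \, \divG{\vect{\mu}_h},
\]
the boundary term on $\partial \Gamma$ vanishing because $g_h = 0$ there (indeed $\partial \Gamma \subset \Sigma_s$ and $\phi_h$ vanishes on $\Sigma_s$). Since the sequence \eqref{eq:2D-sequence-div} makes $\divG{}$ map $\Sonestar{p-1}{\Gamma}$ onto $S^2_{p-1}(\Gamma)$, the weak continuity of $\grad \phi_h$ is equivalent to the scalar mortar condition that $\int_\Gamma g_h\, v = 0$ for all $v \in S^2_{p-1}(\Gamma)$. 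Hence $\grad \phi_h \in K_{h,M}$ exactly when $\phi_h$ lies in the scalar mortar space
\[
X_h := \Big\{ \phi_h \in \prod_s S^0_p(\Omega_s;\Sigma_s) : \int_\Gamma [\traceh_\Gamma \phi_h]\, v = 0 \ \ \forall v \in S^2_{p-1}(\Gamma) \Big\},
\]
and in particular $\grad X_h \subseteq K_{h,M}$.

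It remains to prove that the scalar mortar method converges in the energy norm, that is, that every $\phi \in H^1_0(\Omega)$ admits approximants $\phi_h \in X_h$ with $\|\grad \phi - \grad \phi_h\|_0 \to 0$; by the inclusion above this yields \eqref{eq:CDK}. I would build such $\phi_h$ by first taking on each subdomain a spline quasi-interpolant of $\phi$ in $S^0_p(\Omega_s;\Sigma_s)$, whose broken gradient converges to $\grad \phi$ and whose trace jump $g_h$ across $\Gamma$ tends to zero because $\phi$ is continuous there. This jump need not be orthogonal to $S^2_{p-1}(\Gamma)$, so I would then add to the slave-side function a discrete correction removing the non-orthogonal part of $g_h$; the scalar inf-sup condition \eqref{eq:inf-sup0} makes this correction well defined and bounded by $\|g_h\|$, hence asymptotically negligible in the energy norm.

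The main obstacle is exactly this last approximation estimate: constructing the jump correction and controlling its $H^1$ norm. This is where the stability \eqref{eq:inf-sup0} of \cite{Brivadis_2015aa,Antolin_2018} is indispensable, as it converts the smallness of the trace jump into smallness of the correction. The remaining points --- the vanishing of the boundary term and the membership $\grad \phi_h \in V_h$ --- are routine once the conditions $\phi_h|_{\Sigma_s} = 0$ are imposed. Finally, when $\Omega_2$ carries a N\'ed\'elec discretisation the argument is unchanged: the identity $\traceg{\Gamma}(\grad \phi_h) = \gradG(\traceh_\Gamma \phi_h)$ and the surjectivity of $\divG{}$ now hold through the finite element de Rham diagram on $\Omega_2$, which is the only place where the discretisation of the second subdomain enters.
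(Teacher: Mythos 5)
Your proposal follows essentially the same route as the paper's proof: the kernel is realized through broken discrete gradients, the mortar constraint is reduced by surface integration by parts to orthogonality of the scalar trace jump against $S^2_{p-1}(\Gamma)$, and the approximant is built from subdomain (quasi-)interpolants plus a slave-side correction of the jump. The step you flag as the main obstacle is resolved in the paper with exactly the tool you name: the $H^s$-stable ($s\in[0,1]$) Fortin projector $\Pi_F$ associated with \eqref{eq:inf-sup0} is applied to the trace jump and then extended into $\Omega_1$ by a discrete harmonic extension $\mathcal{R}_{\Omega_1}$, so that the energy norm of the correction is controlled by the $H^{1/2}(\Gamma)$ norm of the jump, which vanishes as $h \to 0$ by the approximation properties of the commuting projectors.
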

\begin{proof}
We start characterizing the space $K_{h,M}$. For $\vect{u}_h \in K_{h,M}$, by the definition of $a(\cdot, \cdot)$ it holds that $\curl \vect{u}_h|_{\Omega_k} = \vect{0}$, for $k = 1,2$, and there exists $\phi_{k,h} \in S^0_p(\Omega_k;\Sigma_k)$ such that $\vect{u}_h|_{\Omega_k} = \grad \phi_{k,h}$. Moreover, from the definition of $K_{h,M}$ (and $V_{h,M}$), applying the definition of the surface gradient, the fact that differential and trace operators commute, and integration by parts, we know that
\begin{align*}
b(\vect{u}_h,\vect{\mu}_h) &= \int_\Gamma \gradG (\traceh_{\Gamma} (\phi_{1,h}) - \traceh_{\Gamma} (\phi_{2,h})) \cdot \vect{\mu}_h \\
&= - \int_\Gamma (\traceh_{\Gamma}(\phi_{1,h}) - \traceh_{\Gamma}(\phi_{2,h})) \cdot \divG \vect{\mu}_h = 0 \quad \forall \vect{\mu}_h \in M_h.
\end{align*}
Let us now focus on the mortar constraint. By construction, we knot that $\divG \vect{\mu}_h \in S_{p-1}^2(\Gamma)$, and using the results in \cite{Antolin_2018} we know that there exists a Fortin projector $\Pi_F: L^2(\Gamma) \longrightarrow S^0_p(\Gamma;\partial \Gamma)$, based on the pairing of the two spaces in \eqref{eq:inf-sup0}, such that 
\[
\| \Pi_F \varphi \|_s \leq \| \varphi \|_s \quad \text{ for } s \in [0,1], \, \varphi \in H^s(\Gamma).
\]
Moreover, we denote by $\Pi^0_{\Omega_k}$ and $\Pi^1_{\Omega_k}$, for $k = 1, 2$, the commutative projectors introduced in \cite{Buffa_2011aa} into the spaces $S^0_p(\Omega_k;\Sigma_k)$ and $\Spone{\Omega_k;\Sigma_k}$, respectively. With some abuse of notation, we will also denote $\Pi^0_{\Omega_k} \phi \equiv \Pi^0_{\Omega_k} (\phi|_{\Omega_k})$.

Given $\phi \in H^1_0(\Omega)$, in order to prove \eqref{eq:CDK} we construct $\vect{u}_h \in K_{h,M}$ in the following way:
\[
\vect{u}_h := \left \{
\begin{array}{ll}
\grad (\Pi^0_{\Omega_1} \phi) - \grad {\cal R}_{\Omega_1} \left( \Pi_F \left( \traceh_\Gamma(\Pi^0_{\Omega_1} \phi) - \traceh_\Gamma(\Pi^0_{\Omega_2} \phi) \right) \right) & \text{ in } \Omega_1,\\
\grad (\Pi^0_{\Omega_2} \phi) & \text{ in } \Omega_2,
\end{array}
\right.
\]
where ${\cal R}_{\Omega_1}$ is a continuous extension operator into $S_p^0(\Omega_1; \Sigma_1)$ that solves the discrete Laplacian in $\Omega_1$ imposing a Dirichlet condition on its boundary. To prove the result, we need to show that $\vect{u}_h \in K_{h,M}$, and that it converges to $\grad \phi$.

From the definition of $\vect{u}_h$, it is obvious that its restriction to $\Omega_k$ belongs to $S^1_p(\Omega_k;\Sigma_k)$ and is irrotational, hence $a(\vect{u}_h, \vect{v}_h) = 0$ for any $\vect{v}_h \in V_h$. The second condition to belong to $K_{h,M}$, namely $b(\vect{u}_h, \vect{\mu}_h) = 0$ for all $\vect{\mu_h} \in M_h$ also holds. Indeed, subsequently applying the commutativity of the trace and differential operators, the fact that $\traceh_\Gamma \circ {\cal R}_{\Omega_1}$ is equal to the identity, integration by parts, and that $\Pi_F$ is a projector, we obtain
\begin{align*}
& b(\vect{u}_h, \vect{\mu}_h) = \int_\Gamma [\traceg{\Gamma} \vect{u}_h] \cdot \vect{\mu}_h = \int_\Gamma \Big( \gradG(\traceh_\Gamma(\Pi^0_{\Omega_1} \phi) - \traceh_\Gamma(\Pi^0_{\Omega_2} \phi)) \\
& \; - \gradG{\left(\traceh_\Gamma \left({\cal R}_{\Omega_1} \left( \Pi_F \left( \traceh_\Gamma(\Pi^0_{\Omega_1} \phi) - \traceh_\Gamma(\Pi^0_{\Omega_2} \phi) \right) \right) \right) \right) \Big) } \cdot \vect{\mu}_h \\
& \; = - \int_\Gamma \Big( \traceh_\Gamma(\Pi^0_{\Omega_1} \phi) - \traceh_\Gamma(\Pi^0_{\Omega_2} \phi) -  \Pi_F \left( \traceh_\Gamma(\Pi^0_{\Omega_1} \phi) - \traceh_\Gamma(\Pi^0_{\Omega_2} \phi) \right) \Big) \cdot \divG \vect{\mu}_h = 0,
\end{align*}
which proves that $\vect{u}_h \in K_{h,M}$.

Regarding the convergence, by the commutativity of the $\Pi^0_{\Omega_k}$ and $\Pi^1_{\Omega_k}$ projectors with the gradient, we have
\begin{align*}
\| \grad \phi - \vect{u}_h\|_0 \le & \sum_{k=1}^2 \| \grad \phi - \Pi^1_{\Omega_k} (\grad \phi) \|_{0,\Omega_k} \\
& + \| \grad {\cal R}_{\Omega_1} \left( \Pi_F \left( \traceh_\Gamma(\Pi^0_{\Omega_1} \phi) - \traceh_\Gamma(\Pi^0_{\Omega_2} \phi) \right) \right) \|_{0,\Omega_1}.
\end{align*}
The first term in the sum converges to zero when $h$ tends to zero by the results in \cite{Buffa_2011aa}. For the second term, we use the definition of the continuous extension operator ${\cal R}_{\Omega_1}$, the continuity of the Fortin projector $\Pi_F$ and the trace operator $\traceh_\Gamma$, and the triangular inequality to obtain
\begin{align*}
& \| \grad {\cal R}_{\Omega_1} \left( \Pi_F \left( \traceh_\Gamma(\Pi^0_{\Omega_1} \phi) - \traceh_\Gamma(\Pi^0_{\Omega_2} \phi) \right) \right) \|_{0,\Omega_1} \\
& \le C \| \Pi_F \left( \traceh_\Gamma(\Pi^0_{\Omega_1} \phi) - \traceh_\Gamma(\Pi^0_{\Omega_2} \phi) \right)\|_{1/2,\Gamma} \\
& \le C \| \traceh_\Gamma(\Pi^0_{\Omega_1} \phi - \phi) - \traceh_\Gamma(\Pi^0_{\Omega_2} \phi - \phi)\|_{1/2,\Gamma} \\
& \le C (\| \Pi^0_{\Omega_1} \phi - \phi \|_{0,\Omega_1} + \| \Pi^0_{\Omega_2} \phi - \phi \|_{0,\Omega_2}),
\end{align*}
where $C$ denotes a generic constant independent of $h$. Applying again the results in \cite{Buffa_2011aa}, this term also converges to zero, which finishes the proof.
\end{proof}

\section{Results}
\label{sec:results}

This section presents some numerical results of the applicability of the two substructuring methods introduced in Section~\ref{sec:substructuring}.

\begin{figure}
	\begin{minipage}[t]{.42\textwidth}
		\includegraphics[width=\textwidth]{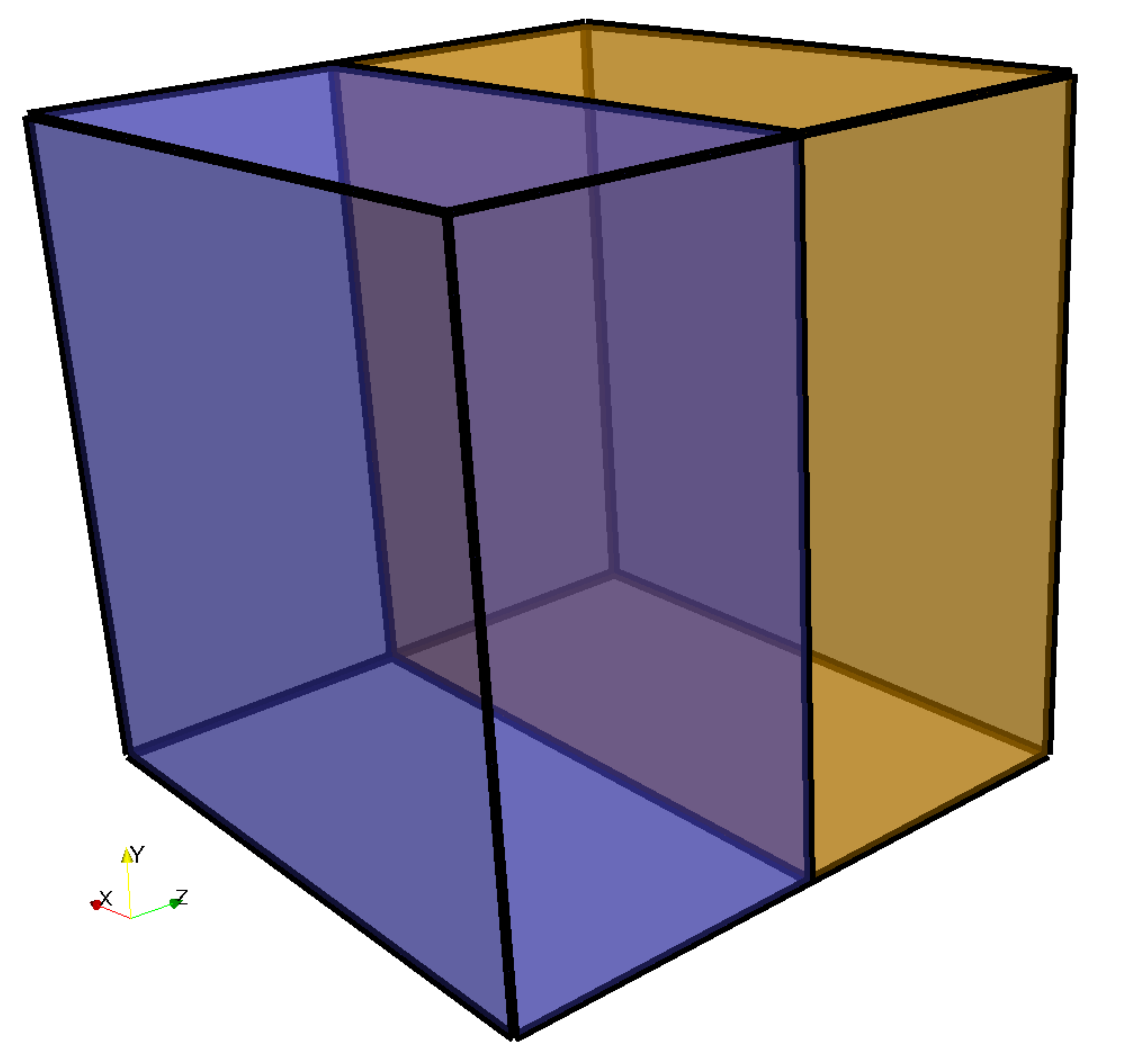}
		\caption[Two conforming patches of the unit cube]{Two conforming patches $\Omega_1$ (in blue) and $\Omega_2$ (in orange) of the unit cube.\label{fig:2cubes_geo}}
	\end{minipage}\hfill
	\begin{minipage}[t]{.45\textwidth}
		\includegraphics[width=\textwidth]{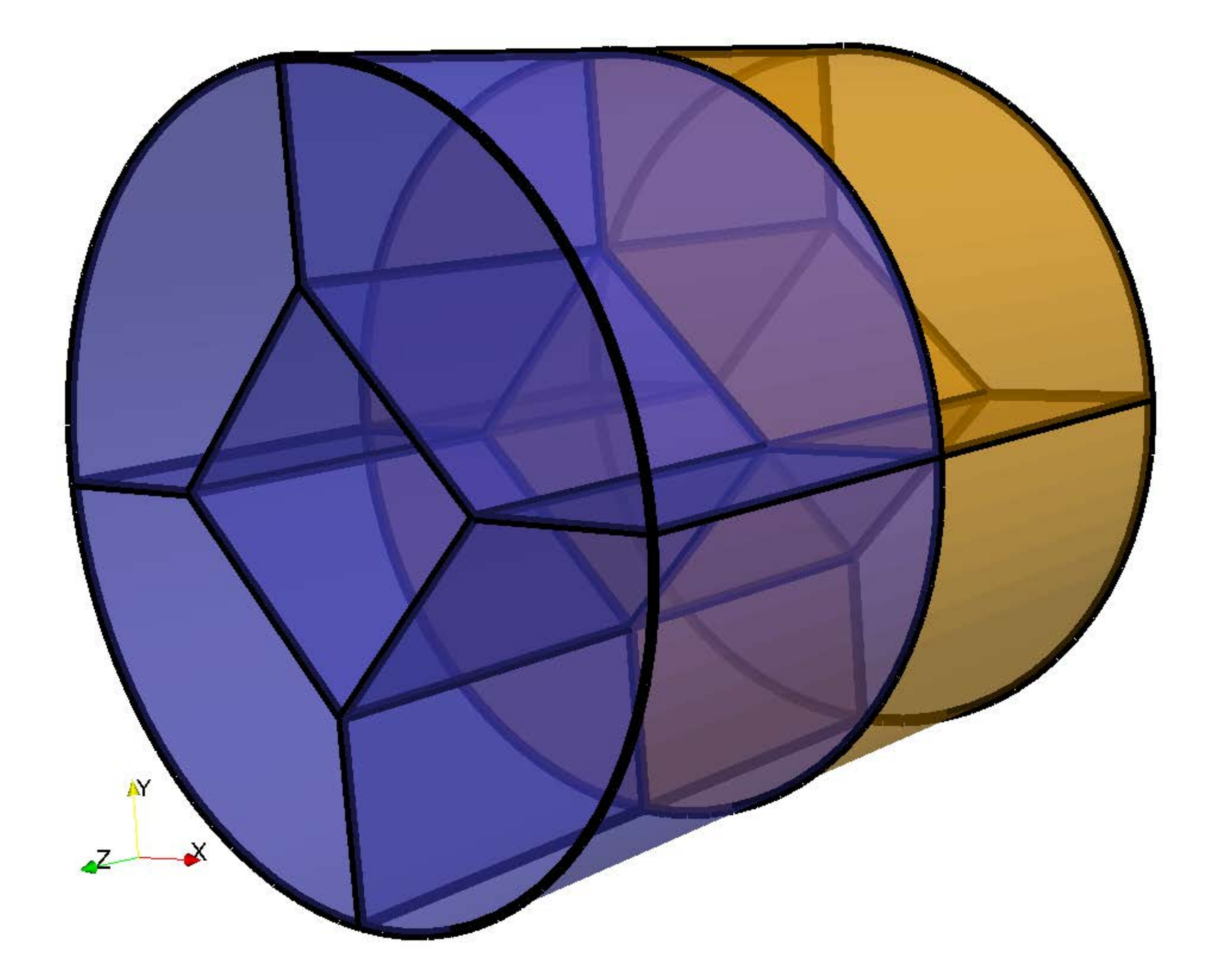}
\caption[Conforming multipatch case of the pillbox]{Conforming multipatch case of the pillbox with domain $\Omega_1$ (in blue) and $\Omega_2$ (in orange).\label{fig:2pillbox_geo}}
	\end{minipage}
\end{figure}

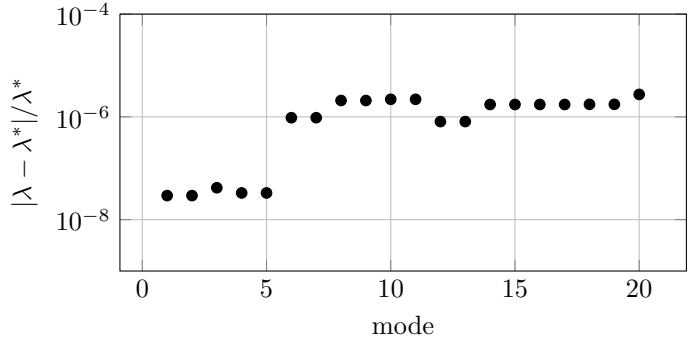
\begin{figure}[ht]
\centering
	\begin{tikzpicture}
	\begin{semilogyaxis}[width=.7\textwidth,height=5cm,xlabel={mode},ylabel={$|\lambda - \lambda^*| / \lambda^*$},ymin=1e-9,ymax=1e-4,grid=major]
	\addplot[mark=*,only marks] table [x={mode}, y={err}, col sep=comma] {fig07.csv};
	\end{semilogyaxis}
	\end{tikzpicture}
	\caption{Mortaring of two single-patch unit cube domains with non-conforming meshes: relative error of the first \num{20} eigenvalues with $p_{\text{IGA}}=4$, $p_{\text{FEM}}=3$, $q=3$, $N_{\text{dof}}=\num{21160}$.\label{fig:2cubes_spectrum}}
\end{figure}

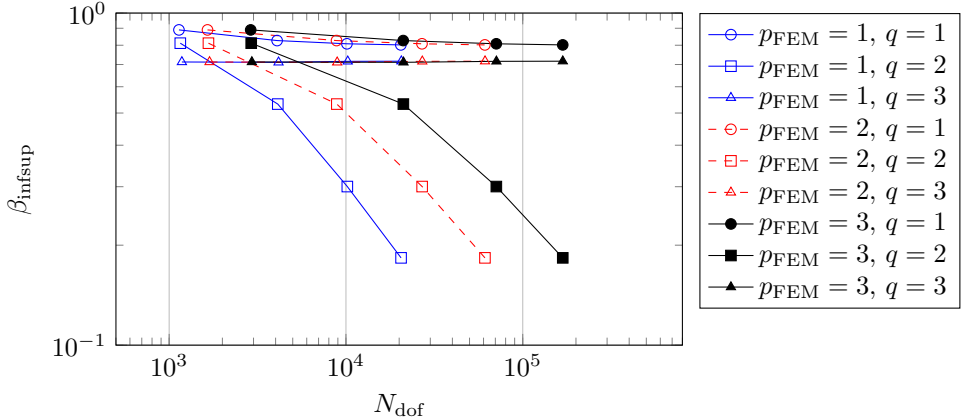
\begin{figure}
	\centering
	\begin{tikzpicture}
	\begin{loglogaxis}[width=.7\textwidth,height=6cm,xlabel={$N_\text{dof}$},ylabel={$\betainfsup$},xmin=500,xmax=8e5,ymin=1e-1,ymax=1,legend pos=outer north east,grid=major]
	\addplot[color=blue, mark=o, solid] table [x={ndof11}, y={infsup11}, col sep=comma] {fig08.csv};
	\addplot[color=blue, mark=square, solid] table [x={ndof12}, y={infsup12}, col sep=comma] {fig08.csv};
	\addplot[color=blue, mark=triangle, solid] table [x={ndof13}, y={infsup13}, col sep=comma] {fig08.csv};
	\addplot[color=red, mark=o, mark options={solid},dashed] table [x={ndof21}, y={infsup21}, col sep=comma] {fig08.csv};
	\addplot[color=red, mark=square, mark options={solid},dashed] table [x={ndof22}, y={infsup22}, col sep=comma] {fig08.csv};
	\addplot[color=red, mark=triangle, mark options={solid},dashed] table [x={ndof23}, y={infsup23}, col sep=comma] {fig08.csv};
	\addplot[color=black, mark=*, solid] table [x={ndof31}, y={infsup31}, col sep=comma] {fig08.csv};
	\addplot[color=black, mark=square*, solid] table [x={ndof32}, y={infsup32}, col sep=comma] {fig08.csv};
	\addplot[color=black, mark=triangle*, solid] table [x={ndof33}, y={infsup33}, col sep=comma] {fig08.csv};
	\legend{{$p_{\text{FEM}}=1,\,q=1$},{$p_{\text{FEM}}=1,\,q=2$},{$p_{\text{FEM}}=1,\,q=3$},
		{$p_{\text{FEM}}=2,\,q=1$},{$p_{\text{FEM}}=2,\,q=2$},{$p_{\text{FEM}}=2,\,q=3$},
		{$p_{\text{FEM}}=3,\,q=1$},{$p_{\text{FEM}}=3,\,q=2$},{$p_{\text{FEM}}=3,\,q=3$}};
	\end{loglogaxis}
	\end{tikzpicture}
	\caption{Mortaring of two single-patch unit cube domains with non-conforming meshes: $\betainfsup$ constants for different space choices. If $q=p_{\text{IGA}}-2$, $\betainfsup$ goes to zero.\label{fig:2cubes_beta}}
\end{figure}

\subsection{Mortar Method}
\paragraph{Cube with two patches} The first test we report is a single patch-to-patch coupling with a trivial geometrical mapping. The unit cube domain $\Omega$ is split in half along the $z$ direction into $\Omega_1$ and $\Omega_2$ (see Fig.~\ref{fig:2cubes_geo}). The coupling interface $\Gamma$ is the square $ \lbrace 0<x<1,\,0<y<1,\,z=1/2\rbrace $. Maxwell's eigenvalue problem~\eqref{eq:curl-curl} is solved using the mortar approach described in section~\ref{sec:mortar}. In $\Omega_1$ we chose an \gls{iga} curl-conforming discretisation with degree $p_{\text{IGA}}=4$ and high regularity $r_{\text{IGA}}=3$, i.e. $V_h = S^1_4(\Omega_1;\Sigma_1)$. On the domain $\Omega_2$ a \gls{fem} discretisation with N\'ed\'elec type hexahedral elements is used with degree $p_{\text{FEM}} = 1,2,3$. This can be straightforwardly accomplished by constructing a B-spline space on $ \Omega_2 $ while setting the regularity of the basis functions to $r_{\text{FEM}}=0$. The grids on the two sides are chosen in such a way that they do not match for any refinement. 

On the interface $\Gamma$ we build the space of Lagrange multipliers $ S^{1^*}_q(\Gamma) $ with degree $q=1,2,3$ and regularity $r=q-1$. The mesh used for the quadrature is given by the intersection of the meshes on both sides which is easy to compute given the tensor product nature of the \gls{iga} hexahedral grid.

The relative errors of the first 20 eigenvalues with respect to the closed form solution are below $10^{-5}$ for the stable case of $p_{\text{FEM}}=3$, $q=3$ and with a total number of degrees of freedom $\Ndof=\num{21160}$ (see Fig.~\ref{fig:2cubes_spectrum}), which shows that there are no spurious eigenvalues. To validate the inf-sup stability we evaluate the inf-sup constant $\betainfsup$ numerically \cite{Chapelle_1993aa} while increasing the mesh refinement level. Fig.~\ref{fig:2cubes_beta} confirms the stability properties expected from the theory: the method is inf-sup stable when we choose $q = p_{\text{IGA}}-1 = 3$. Moreover, we see that the method is also stable for $q=1$, while it is unstable for $q=2$. In general, the inf-sup stability is obtained if $q = p_{\text{IGA}} -k$ with $k$ odd, while it is unstable when $k$ is even.

\paragraph{Pillbox cavity}
We then extend our testing to the case of multipatch geometries. In order to consider a non-trivial mapping, the same test is performed on a cylindrical cavity of radius $R=1$ and length $L=2$ filled with vacuum. The geometry is described with ten NURBS patches of degree 2, (see Fig.~\ref{fig:2pillbox_geo}). As before, we split the cavity in two subdomains separated by the interface $\Gamma = \{(x,y,z) : x^2 + y^2 < 1, z = 1 \}$, and use an \gls{iga} discretisation of degree 4 in $\Omega_1$, and a \gls{fem} discretisation with different degrees in $\Omega_2$. The discretisation spaces on both sides are constructed following the classical multipatch approach such that degrees of freedom lying on adjacent interfaces are glued together. Instead, the Lagrangian multipliers basis is built independently on each of the patches that belong to $\partial \Omega_1 \cap \Gamma$, and that fully describe the interface $ \Gamma $, and the full discrete space $M_h$ is obtained by the union of all of them without any constraint on the connecting lines, i.e. the basis can present jumps across the patches on $ \Gamma $.

In Fig.~\ref{fig:pillbox_beta} the results for the inf-sup constant are shown, for different values of the degree for the \gls{fem} spaces and the multiplier; the behaviour matches the one of the single-patch coupling and the expected one. We also report in Table~\ref{tab:SSC_pillbox_result} the computed eigenfrequencies for $p_{\text{IGA}}=4$, $q=3$ and $p_{\text{FEM}}=1$ in the second mesh ($\Ndof \approx 16000$), along with the exact values, which confirms that no spurious eigenvalues appear.

\begin{figure}
\centering
\begin{tikzpicture}
\begin{loglogaxis}[width=.7\textwidth,height=6cm,,xlabel={$N_\text{dof}$},ylabel={$\betainfsup$},xmin=500,xmax=8e5,ymin=1e-1,ymax=1,legend pos=outer north east,grid=major]
  \addplot[color=blue, mark=o, solid] table [x={ndof11}, y={infsup11}, col sep=comma] {fig09.csv};
  \addplot[color=blue, mark=square, solid] table [x={ndof12}, y={infsup12}, col sep=comma] {fig09.csv};
  \addplot[color=blue, mark=triangle, solid] table [x={ndof13}, y={infsup13}, col sep=comma] {fig09.csv};
  \addplot[color=red, mark=o, mark options={solid},dashed] table [x={ndof21}, y={infsup21}, col sep=comma] {fig09.csv};
  \addplot[color=red, mark=square, mark options={solid},dashed] table [x={ndof22}, y={infsup22}, col sep=comma] {fig09.csv};
  \addplot[color=red, mark=triangle, mark options={solid},dashed] table [x={ndof23}, y={infsup23}, col sep=comma] {fig09.csv};
  \addplot[color=black, mark=*, solid] table [x={ndof31}, y={infsup31}, col sep=comma] {fig09.csv};
  \addplot[color=black, mark=square*, solid] table [x={ndof32}, y={infsup32}, col sep=comma] {fig09.csv};
  \addplot[color=black, mark=triangle*, solid] table [x={ndof33}, y={infsup33}, col sep=comma] {fig09.csv};
  \legend{{$p_{\text{FEM}}=1,\,q=1$},{$p_{\text{FEM}}=1,\,q=2$},{$p_{\text{FEM}}=1,\,q=3$},
          {$p_{\text{FEM}}=2,\,q=1$},{$p_{\text{FEM}}=2,\,q=2$},{$p_{\text{FEM}}=2,\,q=3$},
          {$p_{\text{FEM}}=3,\,q=1$},{$p_{\text{FEM}}=3,\,q=2$},{$p_{\text{FEM}}=3,\,q=3$}};
\end{loglogaxis}
\end{tikzpicture}
\caption{Mortaring of two conforming multipatch pill-box domains with non-conforming meshes: $\betainfsup$ constants for different space choices. If $q=p_{\text{IGA}}-2$, $\betainfsup$ goes to zero.\label{fig:pillbox_beta}}
\end{figure}
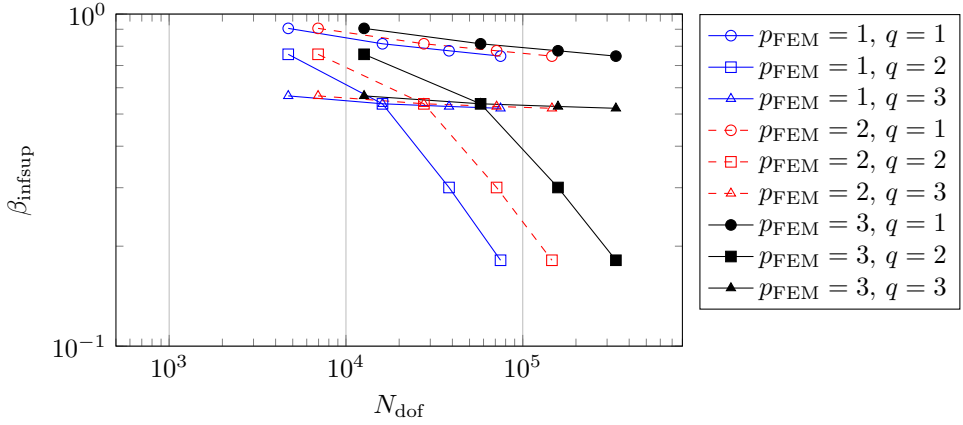

\paragraph{Cube with non-conforming patches}
Finally, we consider again the unit cube, but we further split the subdomain $\Omega_1$ along the $x$ direction into three patches of equal size, while $\Omega_2$ is left unchanged, which gives a geometry described with four non-conforming patches, and with non-conforming meshes. The construction of the discrete spaces on each side, and for the Lagrange multiplier follows along the same lines that for the pillbox cavity, the fact that the patches are not conforming does not pose any difficulty, since the multiplier is defined on each patch separately.

The behaviour of the inf-sup constant, presented in Fig.~\ref{fig:beta_non_conforming}, is analogous to the previous cases. Moreover, the convergence of the 10th eigenvalue, that we report in Fig.~\ref{fig:non-conforming-convergence}, shows that the order of convergence is dominated by the lowest degree of the different discretisation spaces.

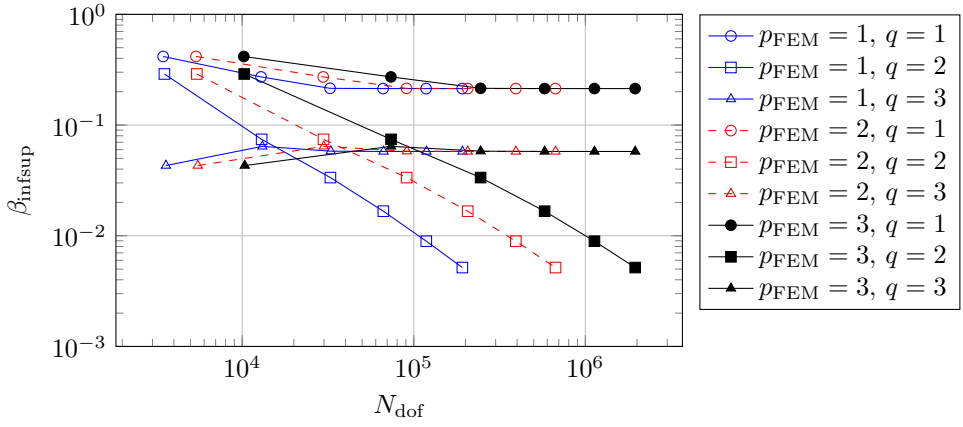
\begin{figure}
\centering
	\begin{tikzpicture}
	\begin{loglogaxis}[width=.7\textwidth,height=6cm,xlabel={$N_\text{dof}$},ylabel={$\betainfsup$},ymin=1e-3,ymax=1,legend pos=outer north east,grid=major]
	\addplot[color=blue, mark=o, solid] table [x={ndof11}, y={infsup11}, col sep=comma] {fig10.csv};
	\addplot[color=blue, mark=square, solid] table [x={ndof12}, y={infsup12}, col sep=comma] {fig10.csv};
	\addplot[color=blue, mark=triangle, solid] table [x={ndof13}, y={infsup13}, col sep=comma] {fig10.csv};
	\addplot[color=red, mark=o, mark options={solid},dashed]  table [x={ndof21}, y={infsup21}, col sep=comma] {fig10.csv};
	\addplot[color=red, mark=square, mark options={solid},dashed] table [x={ndof22}, y={infsup22}, col sep=comma] {fig10.csv};
	\addplot[color=red, mark=triangle, mark options={solid},dashed] table [x={ndof23}, y={infsup23}, col sep=comma] {fig10.csv};
	\addplot[color=black, mark=*, solid] table [x={ndof31}, y={infsup31}, col sep=comma] {fig10.csv};
	\addplot[color=black, mark=square*, solid] table [x={ndof32}, y={infsup32}, col sep=comma] {fig10.csv};
	\addplot[color=black, mark=triangle*, solid] table [x={ndof33}, y={infsup33}, col sep=comma] {fig10.csv};
	\legend{{$p_{\text{FEM}}=1,\,q=1$},{$p_{\text{FEM}}=1,\,q=2$},{$p_{\text{FEM}}=1,\,q=3$},
		{$p_{\text{FEM}}=2,\,q=1$},{$p_{\text{FEM}}=2,\,q=2$},{$p_{\text{FEM}}=2,\,q=3$},
		{$p_{\text{FEM}}=3,\,q=1$},{$p_{\text{FEM}}=3,\,q=2$},{$p_{\text{FEM}}=3,\,q=3$}};
	\end{loglogaxis}
	\end{tikzpicture}
	\caption{Mortaring of two non-conforming multipatch unit cube domains: $\betainfsup$ constants for different space choices. If $q=p_{\text{IGA}}-2$, $\betainfsup$ goes to zero.\label{fig:beta_non_conforming}}
\end{figure}

\begin{figure}[ht]
	\centering
	\begin{tikzpicture}
	\begin{loglogaxis}[width=.7\textwidth,height=6cm,xlabel={$N_\text{dof}$},ylabel={$|\lambda_1 - \lambda_1^*| / \lambda_1^*$},legend pos=outer north east,grid=major]
	\addplot[color=blue, mark=o, solid] table [x={ndof11}, y={error11}, col sep=comma] {fig11.csv};
	\addplot[color=blue, mark=square, solid] table [x={ndof12}, y={error12}, col sep=comma] {fig11.csv};
	\addplot[color=blue, mark=triangle, solid] table [x={ndof13}, y={error13}, col sep=comma] {fig11.csv};
	\addplot[color=red, mark=o, mark options={solid},dashed] table [x={ndof21}, y={error21}, col sep=comma] {fig11.csv};
	\addplot[color=red, mark=square, mark options={solid},dashed] table [x={ndof22}, y={error22}, col sep=comma] {fig11.csv};
	\addplot[color=red, mark=triangle, mark options={solid},dashed] table [x={ndof23}, y={error23}, col sep=comma] {fig11.csv};
	\addplot[color=black, mark=*, solid] table [x={ndof31}, y={error31}, col sep=comma] {fig11.csv};
	\addplot[color=black, mark=square*, solid] table [x={ndof32}, y={error32}, col sep=comma] {fig11.csv};
	\addplot[color=black, mark=triangle*, solid] table [x={ndof33}, y={error33}, col sep=comma] {fig11.csv};
	\legend{{$p_{\text{FEM}}=1,\,q=1$},{$p_{\text{FEM}}=1,\,q=2$},{$p_{\text{FEM}}=1,\,q=3$},
		{$p_{\text{FEM}}=2,\,q=1$},{$p_{\text{FEM}}=2,\,q=2$},{$p_{\text{FEM}}=2,\,q=3$},
		{$p_{\text{FEM}}=3,\,q=1$},{$p_{\text{FEM}}=3,\,q=2$},{$p_{\text{FEM}}=3,\,q=3$}};
	\end{loglogaxis}
	\end{tikzpicture}
	\caption{Mortaring of two non-conforming multipatch unit cube domains: 10th eigenvalue convergence for different choices of the discretisation degrees.} \label{fig:non-conforming-convergence}
\end{figure}
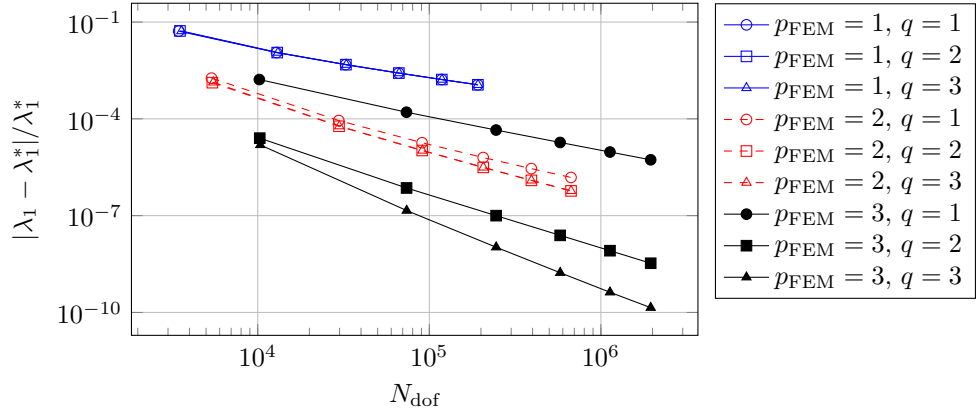

\subsection{State Space Concatenation}
For the \gls{ssc} method we perform analogous tests as for the mortar case. Given we do not have a proof for the stability of the coupling, we are particularly interested in investigating numerically the behaviour of the solution with respect to the number of waveguide modes selected as Lagrange multipliers. 

\paragraph{Cube with two patches} Let us consider the two-patch geometry in Fig.~\ref{fig:2cubes_geo}. We discretise with \gls{iga} on each subdomain, and since the interface is a square, the waveguide eigenmodes $\vect{\varphi}_k$ can be computed analytically {\cite{Flisgen_2015aa}}. Figure~\ref{fig:SSC_2cubes_conv} shows the convergence of the first eigenvalue to the exact solution for the case of matching and non-matching grids, and for different choices of the discretisation degrees, while keeping fixed the number of waveguide modes $\Ngamma=18$.

\begin{figure}
	\centering
		\begin{tikzpicture}
		\begin{loglogaxis}[width=.7\textwidth,height=6cm,xlabel={$h$},ylabel={$|\lambda-\lambda^*| / \lambda^*$},xmin=1e-2,xmax=1,ymin=5e-9,ymax=5e-3,legend pos=outer north east, legend style={cells={align=left}},grid = major]
		\addplot[red, mark=*] table [x=h, y=mode1, col sep=comma] {fig12a.csv};
		\addlegendentry{same degree $p$,\\matching}
		\addplot[red,dashed,mark=o, mark options={solid}] table [x=h, y=mode1, col sep=comma] {fig12b.csv};
		\addlegendentry{same degree $p$,\\non-matching}
		\addplot[blue,mark=square*] table [x=h, y=mode1, col sep=comma] {fig12c.csv};
		\addlegendentry{different degrees,\\matching}
		\addplot[blue, dashed, mark=square, mark options={solid}] table [x=h, y=mode1, col sep=comma] {fig12d.csv};
		\addlegendentry{different degrees,\\non-matching}
		\end{loglogaxis}
		\end{tikzpicture}
	\caption{SSC coupling of two conforming unit cube patches: first eigenvalue convergence for different choices of the discretisation degrees and maximum regularity, i.e., $p=p_1=p_2=2$, $r=r_1=r_2=1$ (red circles) and $p_1=3$, $p_2=2$, $r_1=2$, $r_2=1$ (blue squares), and for matching and non-matching grids on $\Gamma$ (dashed and non-dashed lines). The number of waveguide modes is fixed to $\Ngamma=18$.\label{fig:SSC_2cubes_conv}}
\end{figure}
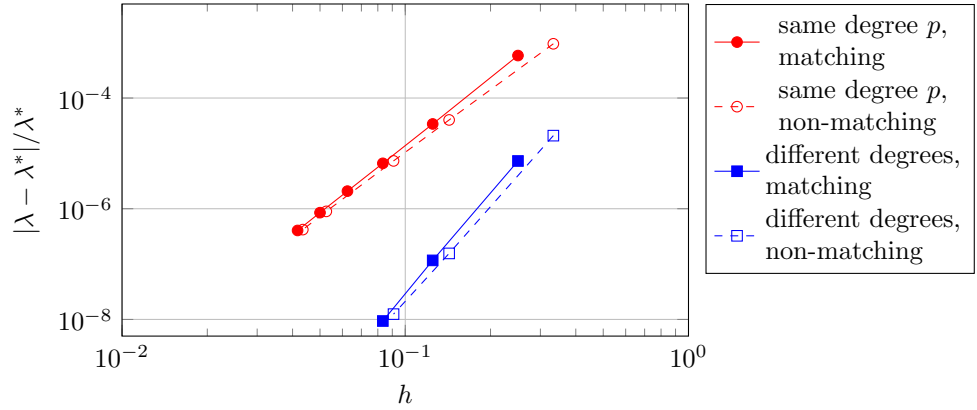

In Fig.~\ref{fig:SSC_spectrum} we present the relative errors of the first \num{20} computed eigenfrequencies in the cube obtained with a fixed B-Spline discretisation on both sides ($p_1 = 3$, $r_1 = 2$ and $p_2 = 2$, $r_2 = 1$ with non-matching grids on the interface) while increasing the number of analytical waveguide modes on the interface. It is noticeable how $\Ngamma$ influences the spectrum approximation, in particular, when not enough modes are chosen, since some eigenfunctions cannot be represented by the Lagrange multiplier, some of the higher order modes are not correctly captured. However the size of the coupling space cannot be taken arbitrarily big since the saddle point becomes unstable. This is illustrated in Fig.~\ref{fig:SSC_beta}, where the $\betainfsup$ constant is approximated for different choices of $\Ngamma$ using the numerical test from \cite{Chapelle_1993aa}. It is evident that increasing $\Ngamma$ causes the method to fail if the two subdomains are not refined accordingly.

\begin{figure}[!t]
	\begin{minipage}[t]{.48\textwidth}
		\begin{tikzpicture}
		\begin{semilogyaxis}[width=\textwidth,xlabel={Mode},ylabel={$|\lambda-\lambda^*| / \lambda^*$},ymin=1e-6,ymax=1,grid=major]
		\addplot[mark=*,only marks] table [x=Mode, y=err, col sep=comma] {fig13a.csv};
		\end{semilogyaxis}
		\end{tikzpicture}
		\subcaption{Number of waveguide modes $\Ngamma = 2$.}
	\end{minipage}
	\hfill
	\begin{minipage}[t]{.48\textwidth}
		\begin{tikzpicture}
		\begin{semilogyaxis}[width=\textwidth,xlabel={Mode},ylabel={$|\lambda-\lambda^*| / \lambda^*$},ymin=1e-6,ymax=1,grid=major]
		\addplot[mark=*,only marks] table [x=Mode, y=err, col sep=comma] {fig13b.csv};
		\end{semilogyaxis}
		\end{tikzpicture}
		\subcaption{Number of waveguide modes $\Ngamma = 6$.}
	\end{minipage}
	
	\begin{minipage}[t]{.48\textwidth}
		\begin{tikzpicture}
		\begin{semilogyaxis}[width=\textwidth,xlabel={Mode},ylabel={$|\lambda-\lambda^*| / \lambda^*$},ymin=1e-6,ymax=1,grid=major]
		\addplot[mark=*,only marks] table [x=Mode, y=err, col sep=comma] {fig13c.csv};
		\end{semilogyaxis}
		\end{tikzpicture}
		\subcaption{Number of waveguide modes $\Ngamma = 18$.}
	\end{minipage}
	\hfill
	\begin{minipage}[t]{.48\textwidth}
		\begin{tikzpicture}
		\begin{semilogyaxis}[width=\textwidth,xlabel={Mode},ylabel={$|\lambda-\lambda^*| / \lambda^*$},ymin=1e-6,ymax=1,grid=major]
		\addplot[mark=*,only marks] table [x=Mode, y=err, col sep=comma] {fig13d.csv};
		\end{semilogyaxis}
		\end{tikzpicture}
		\subcaption{Number of waveguide modes $\Ngamma = 34$.}
	\end{minipage}
	\caption{SSC coupling between two patches: convergence of the first \num{20} eigenvalues for a fixed discretisation on the two subdomains and an increasing number of waveguide modes $\Ngamma$.\label{fig:SSC_spectrum}}
\end{figure}
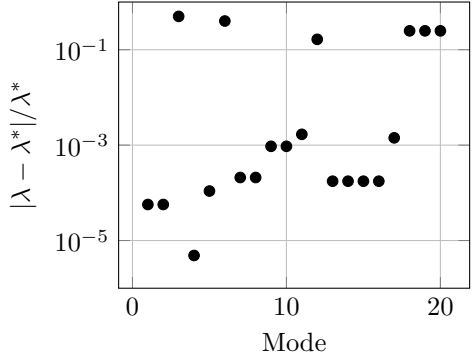
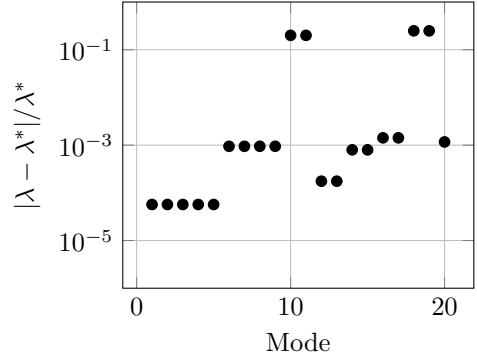
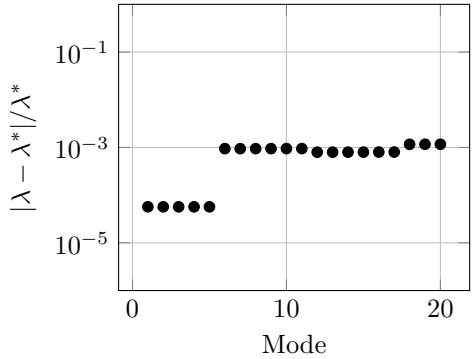
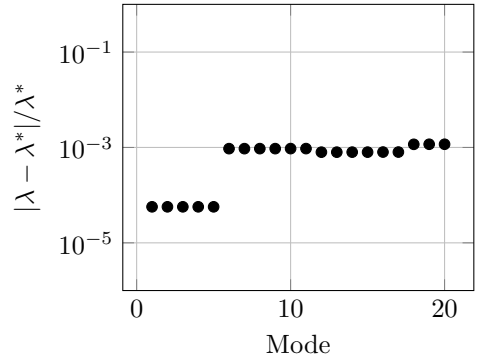

\begin{figure}[!t]
	\begin{minipage}[t]{.48\textwidth}
		\begin{tikzpicture}
		\begin{loglogaxis}[width=\linewidth,xlabel={$h$},ylabel={$\betainfsup$},xmin=1e-2,xmax=3e-1,ymin=1e-8,ymax=.5,legend pos=south west,grid=major]
		\addplot table [x=h1, y=beta1, col sep=comma] {fig14a.csv};
		\addplot table [x=h15, y=beta15, col sep=comma] {fig14b.csv};
		\addplot table [x=h26, y=beta26, col sep=comma] {fig14c.csv};
		\legend{$\Ngamma=1$,$\Ngamma=15$,$\Ngamma=26$}
		\end{loglogaxis}
		\end{tikzpicture}
		\caption{SSC coupling of two patches: $\betainfsup$ constants for an increasing number of modes, instability for $\Ngamma\to\infty$. 		\label{fig:SSC_beta}}
	\end{minipage}
	\hfill
	\begin{minipage}[t]{.48\textwidth}
		\begin{tikzpicture}
		\begin{semilogyaxis}[width=\linewidth,xlabel={Mode},ylabel={$|\lambda-\lambda^*| / \lambda^*$},grid=major]
		\addplot[mark=*,only marks] table [x=mode, y=err, col sep=comma] {fig15.csv};
		\end{semilogyaxis}
		\end{tikzpicture}
		\caption{SSC coupling of \gls{iga} ($p_1 = 3$, $r_1=2$, $N_\Gamma=25$) and tetrahedral lowest-order \gls{fem}: spectrum approximation.\label{fig:SSC_IGA-FEM}}
	\end{minipage}
\end{figure}

As mentioned in section~\ref{sec:mortar_vs_ssc}, the \gls{ssc} coupling allows for straightforward coupling of completely different grids, since the construction of the coupling matrices is completely independent on each side. In Fig.~\ref{fig:SSC_IGA-FEM} the approximation of the first 40 eigenvalues in the cube for an \gls{iga}-\gls{fem} coupling is shown. Domain $ \Omega_1 $ is discretised with \gls{iga} ($ p_1=3 $, $ r_1=2 $, $ \Ndof \approx \num{500} $), while domain $ \Omega_2 $ employs classical first order tetrahedral edge elements \gls{fem} ($ \Ndof \approx \num{50000} $). The main advantage here is that no computation of the intersection mesh is required.

\paragraph{Pillbox geometry} We then consider the pillbox geometry showed in Fig.~\ref{fig:2pillbox_geo}. The interface $\Gamma$ is a circle, thus the closed form solutions for the waveguide modes (see \cite{Hill_2009aa}) can be used to exactly evaluate the waveguide modes $\vect{\varphi}_k$. We use both the \gls{te} and \gls{tm} modes as the basis. One side is discretised with \gls{iga} using basis functions of degree two and regularity $r=1$ ($\Ndof= 5440$), while the other side is discretised with \gls{fem} using low order N\'ed\'elec edge elements ($\Ndof=6322$), and we set $N_\Gamma = 25$. The results for the computed eigenfrequencies are reported in Table~\ref{tab:SSC_pillbox_result} along with the exact values. It is evident that some spurious modes appear in the spectrum as a consequence of the coupling due to non-physical charge appearing on the interface $\Gamma$, see Fig.~\ref{fig:pillbox_unphysical}, where we plot the magnitude of one of the modes associated to a spurious eigenvalue. 

\begin{figure}
\centering
\includegraphics[width=.5\textwidth]{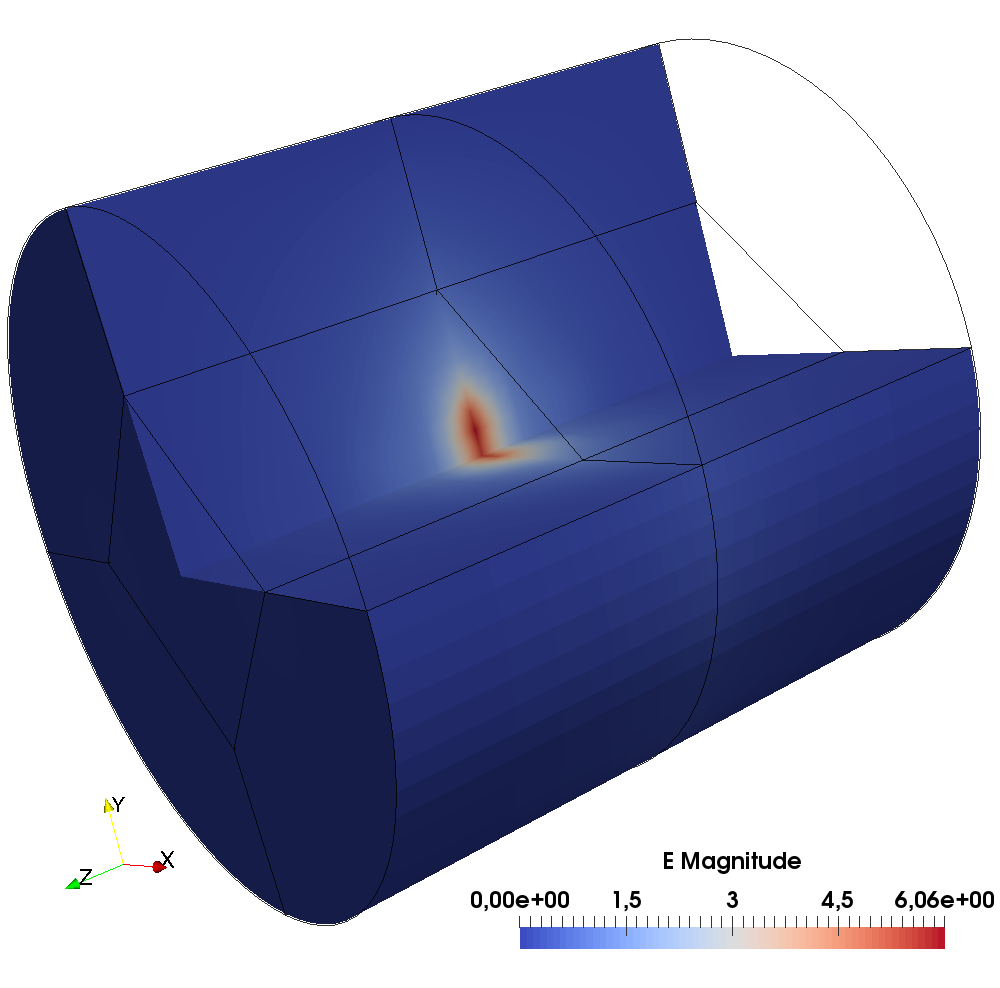}
\caption{An example of one of the modes presenting unphysical charge on the coupling interface computed using \gls{ssc}.\label{fig:pillbox_unphysical}} 
\end{figure}\hfill

\begin{table}[!t]
	\centering
	\setlength{\tabcolsep}{0.5em}
	\begin{minipage}{.45\textwidth}
		\centering
		\begin{tabular}{lll}
			\toprule
			$f_\text{exact}$ & $f_\text{mortar}$ & $f_\text{SSC}$\\
			\midrule
			                &                & \num{0.035857} \\
			                &                & \num{0.035857} \\
			                &                & \num{0.577485} \\
			                &                & \num{0.584617} \\
			                &                & \num{1.841232} \\
			                &                & \num{1.841232} \\
			                &                & \num{2.253919} \\
			                &                & \num{2.254197} \\
			 \num{2.373958} & \num{2.373960} & \num{2.373968} \\
			 \num{2.373958} & \num{2.373960} & \num{2.373968} \\
			 \num{2.705705} & \num{2.705706} & \num{2.704475} \\
			 \num{2.705705} & \num{2.705706} & \num{2.704475} \\
			\bottomrule
		\end{tabular}
	\end{minipage}	\hspace{.05\textwidth}	\begin{minipage}{.45\textwidth}
		\centering
		\begin{tabular}{lll}
			\toprule
			$f_\text{exact}$ & $f_\text{mortar}$ & $f_\text{SSC}$\\
			\midrule
			 \num{2.942116} & \num{2.942116} & \num{2.942116}\\
			 \num{3.036078} & \num{3.036078} & \num{3.036078}\\
			 \num{3.182680} & \num{3.182681} & \num{3.182785}\\
			 \num{3.182680} & \num{3.182681} & \num{3.182785}\\
			                &                & \num{3.214081}\\
			                &                & \num{3.214081}\\
			 \num{3.301959} & \num{3.301961} & \num{3.301962}\\
			 \num{3.702910} & \num{3.702919} & \num{3.702994}\\
			 \num{3.749870} & \num{3.749879} & \num{3.753896}\\
			 \num{3.749870} & \num{3.749879} & \num{3.753896}\\
			 \num{3.811044} & \num{3.811087} & \num{3.811201}\\
			 \num{3.811044} & \num{3.811087} & \num{3.811263}\\
			\bottomrule
		\end{tabular}
	\end{minipage}
	\caption{Comparison between the exact eigenfrequencies in \si{\giga\hertz} of the pillbox cavity and the ones computed using mortaring and \gls{ssc}. For mortar we present the case $p_{\text{IGA}}=4$, $q=3$, $p_{\text{FEM}}=1$ ($\Ndof \approx 16000$), while for SSC we chose $p_1 = 2$, $p_2=1$, $N_\Gamma=25$ ($\Ndof \approx 7000$).\label{tab:SSC_pillbox_result}}
\end{table}

\subsubsection{Simulation of a full TESLA cavity}

\begin{figure}
\begin{center}
\includegraphics[width=.9\linewidth]{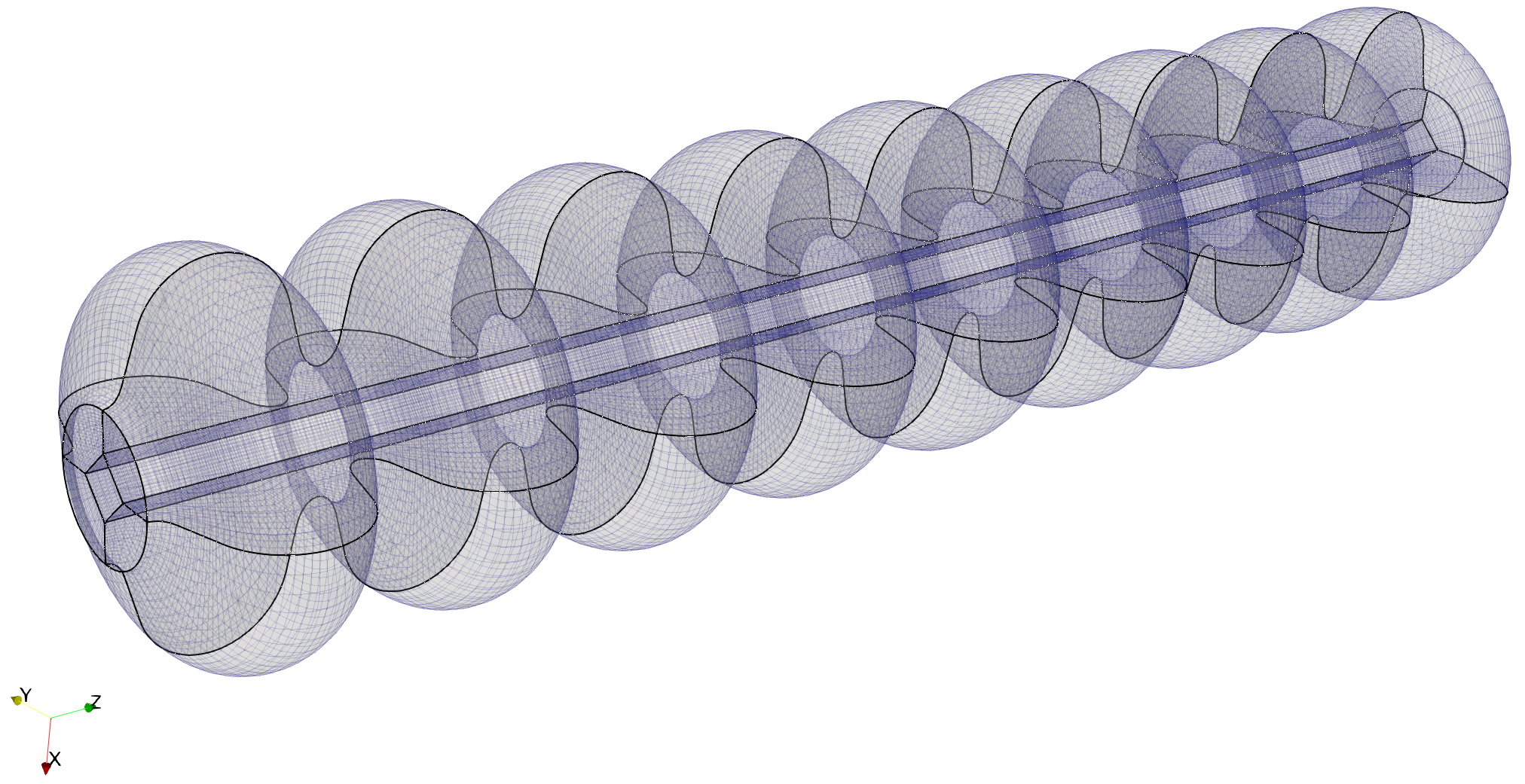}
\caption{Patch subdivision (black lines) and mesh (blue lines) for the IGA section of the full TESLA cavity model, the mesh has 440020 subdivisions.}
\label{fig:fullteslamesh}
\end{center}
\end{figure}

As a final example of the applicability of the two coupling methods to \gls{rf} cavity simulation, we consider the \gls{tesla} cavity, including the two \gls{hom} couplers at both ends (see Fig.~\ref{fig:homc}). We consider the cavity as if composed by \num{11} blocks (\num{7} of which are identical mid cells whose matrices can be assembled only once) separated by \num{10} circular interfaces \cite{TTF_1995}. Each cell is discretised with \gls{iga} using second degree basis functions and approximately \num{40000} degrees of freedom per cell, the mesh of the cells is shown in Fig.~\ref{fig:fullteslamesh}. The coupling between the cells is performed using Mortar with $q=1$. The two beampipes with the \gls{homc} are instead triangulated by tetrahedra and the discrete matrices are assembled using lowest order N\'ed\'elec Finite Elements through an in-house code (approximately \num{75000} elements). The coupling of the cavity with the beampipes is performed using the \gls{ssc} technique since, as showed before, it is easier to construct the coupling matrices without the necessity of an intersection mesh. In Fig.~\ref{fig:IGA-FEM_approach} the enforced subdivision is highlighted.

As a proof of concept we apply Dirichlet, or equivalently \gls{pec} boundary conditions at the couplers. An even more realistic simulation would impose port boundary conditions. The results are reported in Table~\ref{tab:tesla9_eigenmodes_full} where it is possible to see the presence of nine spurious modes at the beginning of the spectrum when comparing with a finite element reference computation.

\begin{figure}[t]
	\includegraphics[width=\textwidth]{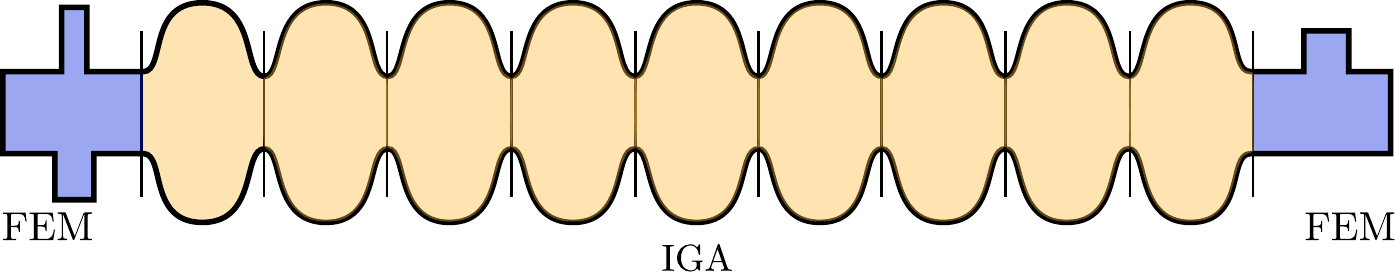}
	\caption{Substructuring with IGA-FEM for full cavity simulation. Each cell is discretized independently with \gls{iga} (orange colour), the two end couplers with \gls{fem} (in blue). All the \gls{iga} pieces are coupled together through mortar method, while the \gls{iga} and \gls{fem} pieces are coupled together through the \gls{ssc} method.}\label{fig:IGA-FEM_approach}
\end{figure}

\begin{table}[!t]
	\centering
	\begin{minipage}{.4\textwidth}
		\centering
		\begin{tabular}{ll}
			\toprule
			$f_\text{FEM}$ & $f_\text{DDM}$\\
			\midrule
			               & \num{0.049932}\\
			               & \num{0.269500}\\
			               & \num{0.274693}\\
			               & \num{0.296580}\\
			               & \num{0.301146}\\
			               & \num{0.464058}\\
			               & \num{0.494598}\\
			               & \num{0.509495}\\
			               & \num{0.529445}\\
			\num{1.276281} & \num{1.277173}\\
			\num{1.278328} & \num{1.279250}\\
			\num{1.281485} & \num{1.282478}\\
			\num{1.285378} & \num{1.286496}\\
			\num{1.289557} & \num{1.290818}\\
			\num{1.293506} & \num{1.294900}\\
			\num{1.296745} & \num{1.298226}\\
			\num{1.298891} & \num{1.300387}\\
			\num{1.299585} & \num{1.301132}\\
			\num{1.622902} & \num{1.622122}\\
			\num{1.623323} & \num{1.622139}\\
			\bottomrule
		\end{tabular}
	\end{minipage}
	~
	\begin{minipage}{.4\textwidth}
		\centering
		\begin{tabular}{ll}
			\toprule
			$f_\text{FEM}$ & $f_\text{DDM}$\\
			\midrule
			\num{1.630347} & \num{1.629433}\\
			\num{1.630735} & \num{1.629452}\\
			\num{1.642956} & \num{1.641527}\\
			\num{1.643286} & \num{1.641594}\\
			\num{1.660529} & \num{1.658129}\\
			\num{1.660681} & \num{1.658379}\\
			\num{1.682429} & \num{1.678763}\\
			\num{1.682568} & \num{1.679235}\\
			\num{1.707649} & \num{1.702588}\\
			\num{1.707769} & \num{1.703258}\\
			\num{1.734849} & \num{1.728442}\\
			\num{1.734901} & \num{1.729246}\\
			\num{1.762429} & \num{1.754965}\\
			\num{1.762557} & \num{1.755841}\\
			\num{1.789472} & \num{1.781015}\\
			\num{1.789628} & \num{1.782023}\\
			\num{1.800177} & \num{1.800734}\\
			\num{1.800241} & \num{1.800779}\\
			\num{1.837903} & \num{1.823147}\\
			\num{1.838053} & \num{1.824116}\\
			\bottomrule	
		\end{tabular}
	\end{minipage}
	\caption{First 40 computed eigenfrequencies of the \gls{tesla} cavity including the \gls{homc} in \si{\giga\hertz}. The simulation of $f_\text{DDM}$ is performed with the same substructuring as in Fig.~\ref{fig:IGA-FEM_approach} while $f_\text{FEM}$ is determined by a Finite Element reference computation using CST Microwave Studio (second order basis functions, default settings), \cite{CST_2018aa}.
Some spurious modes appear at the beginning of the spectrum.}\label{tab:tesla9_eigenmodes_full}
\end{table}
 
\section{Conclusions}
\label{sec:conclusions}

This paper discussed two substructuring approaches that allow the convenient coupling of subdomains discretized by IGA with any other method, in particular FEM. It was shown by numerical examples that a modal basis for the Lagrange multiplier space allows for an easy implementation but is not stable and may cause spurious modes. On the other hand, the isogeometric mortaring is proven to be spectral correct if the degree $q$ on the interface is properly chosen, i.e., $q=p-1$ where $p$ is the spline degree of the slave domain. Numerical examples underline those findings. 

\section*{Acknowledgments}
This work was supported by the Excellence Initiative of the German Federal and State Governments and the Graduate School of Computational Engineering at Technische Universit\"at Darmstadt and the DFG grant SCHO1562/3-1. The work of A.B and R.V has been partially supported by the ERC Advanced Grant ``CHANGE'' (694515,  2016-2020).


\begin{thebibliography}{10}

\bibitem{Antolin_2018}
{\sc P.~Antolin, A.~Buffa, and M.~Fabre}, {\em {A priori error for unilateral
  contact problems with Lagrange multipliers and isogeometric analysis}}, IMA
  J. Numer. Anal.,  (2018), \url{https://doi.org/10.1093/imanum/dry041}.
\newblock To appear.

\bibitem{Aune_2000aa}
{\sc B.~Aune, R.~Bandelmann, D.~Bloess, B.~Bonin, A.~Bosotti, M.~Champion,
  C.~Crawford, G.~Deppe, B.~Dwersteg, D.~A. Edwards, et~al.}, {\em
  Superconducting {TESLA} cavities}, Physical Review Special
  Topics-Accelerators and Beams, 3 (2000), p.~092001.

\bibitem{beirao2014}
{\sc L.~Beir{\~{a}}o~da Veiga, A.~Buffa, G.~Sangalli, and R.~V\'azquez}, {\em
  Mathematical analysis of variational isogeometric methods}, Acta Numer., 23
  (2014), pp.~157--287.

\bibitem{Boffi_2010aa}
{\sc D.~Boffi}, {\em Finite element approximation of eigenvalue problems}, Acta
  Numerica, 19 (2010), pp.~1--120,
  \url{https://doi.org/10.1017/S0962492910000012}.

\bibitem{Bontinck_2017ag}
{\sc Z.~Bontinck, J.~Corno, H.~De~Gersem, S.~Kurz, A.~Pels, S.~Sch{\"o}ps,
  F.~Wolf, C.~de~Falco, J.~D{\"o}lz, R.~V{\'a}zquez, and U.~R{\"o}mer}, {\em
  Recent advances of isogeometric analysis in computational electromagnetics},
  ICS Newsletter (International Compumag Society), 3 (2017),
  \url{http://www.compumag.org/jsite/images/stories/newsletter}.

\bibitem{Bontinck_2018ac}
{\sc Z.~Bontinck, J.~Corno, S.~Sch{\"o}ps, and H.~De~Gersem}, {\em Isogeometric
  analysis and harmonic stator-rotor coupling for simulating electric
  machines}, Computer Methods in Applied Mechanics and Engineering, 334 (2018),
  pp.~40--55, \url{https://doi.org/10.1016/j.cma.2018.01.047}.

\bibitem{Brivadis_2015aa}
{\sc E.~Brivadis, A.~Buffa, B.~Wohlmuth, and L.~Wunderlich}, {\em Isogeometric
  mortar methods}, Computer Methods in Applied Mechanics and Engineering, 284
  (2015), pp.~292 -- 319, \url{https://doi.org/10.1016/j.cma.2014.09.012}.
\newblock Isogeometric Analysis Special Issue.

\bibitem{Buffa05}
{\sc A.~Buffa}, {\em Remarks on the discretization of some noncoercive operator
  with applications to heterogeneous {M}axwell equations}, SIAM J. Numer.
  Anal., 43 (2005), pp.~1--18 (electronic),
  \url{https://doi.org/10.1137/S003614290342385X}.

\bibitem{Buffa_2018}
{\sc A.~{Buffa}, J.~{D{\"o}lz}, S.~{Kurz}, S.~{Sch{\"o}ps}, R.~{V{\'a}zquez},
  and F.~{Wolf}}, {\em {{Multipatch Approximation of the de Rham Sequence and
  its Traces in Isogeometric Analysis}}}, ArXiv e-prints,  (2018),
  \url{https://arxiv.org/abs/1806.01062}.

\bibitem{Buffa_2011aa}
{\sc A.~Buffa, J.~Rivas, G.~Sangalli, and R.~V\'{a}zquez}, {\em Isogeometric
  discrete differential forms in three dimensions}, SIAM Journal on Numerical
  Analysis, 49 (2011), pp.~818--844, \url{https://doi.org/10.1137/100786708}.

\bibitem{Buffa_2010aa}
{\sc A.~Buffa, G.~Sangalli, and R.~V\'{a}zquez}, {\em Isogeometric analysis in
  electromagnetics: B-splines approximation}, Computer Methods in Applied
  Mechanics and Engineering, 199 (2010), pp.~1143--1152,
  \url{https://doi.org/10.1016/j.cma.2009.12.002}.

\bibitem{Chapelle_1993aa}
{\sc D.~Chapelle and K.-J. Bathe}, {\em The inf-sup test}, Computers \&
  structures, 47 (1993), pp.~537--545.

\bibitem{CornoThesis}
{\sc J.~Corno}, {\em Numerical Methods for the Estimation of the Impact of
  Geometric Uncertainties on the Performance of Electromagnetic Devices}, PhD
  thesis, Technische Universit{\"a}t, 2017.

\bibitem{Corno_2016aa}
{\sc J.~Corno, C.~de~Falco, H.~De~Gersem, and S.~Sch{\"o}ps}, {\em Isogeometric
  simulation of {Lorentz} detuning in superconducting accelerator cavities},
  Computer Physics Communications, 201 (2016), pp.~1--7,
  \url{https://doi.org/10.1016/j.cpc.2015.11.015}.

\bibitem{CST_2018aa}
{\sc {CST AG}}, {\em {CST} {STUDIO} {SUITE} 2017}, 2018,
  \url{https://www.cst.com}.

\bibitem{Deparis_2018aa}
{\sc S.~Deparis and L.~Pegolotti}, {\em Coupling non-conforming discretizations
  of {PDEs} by spectral approximation of the {Lagrange} multiplier space}.
\newblock Preprint, Feb. 2018, \url{https://arxiv.org/abs/1802.07601v1}.

\bibitem{TTF_1995}
{\sc D.~Edwards and other}, {\em Tesla test facility linac-design report}, DESY
  Print March,  (1995), pp.~95--01.

\bibitem{Eftang_2014aa}
{\sc J.~L. Eftang and A.~T. Patera}, {\em A port-reduced static condensation
  reduced basis element method for large component-synthesized structures:
  approximation and a posteriori error estimation}, Advanced Modeling and
  Simulation in Engineering Sciences, 1 (2014), p.~3.

\bibitem{Flisgen_2015aa}
{\sc T.~Flisgen}, {\em Compact state-space models for complex superconducting
  radio-frequency structures based on model order reduction and concatenation
  methods}, PhD thesis, Universit\"at Rostock, 2015.

\bibitem{Flisgen_2013aa}
{\sc T.~Flisgen, H.-W. Glock, and U.~van Rienen}, {\em Compact time-domain
  models of complex {RF} structures based on the real eigenmodes of segments},
  Microwave Theory and Techniques, IEEE Transactions on, 61 (2013),
  pp.~2282--2294.

\bibitem{Hill_2009aa}
{\sc D.~A. Hill}, {\em Electromagnetic fields in cavities: deterministic and
  statistical theories}, vol.~35, John Wiley \& Sons, 2009.

\bibitem{HIP02a}
{\sc R.~Hiptmair}, {\em Finite elements in computational electromagnetism},
  Acta Numer., 11 (2002), pp.~237--339.

\bibitem{Hiptmair-Schwab}
{\sc R.~Hiptmair and C.~Schwab}, {\em Natural boundary element methods for the
  electric field integral equation on polyhedra}, SIAM J. Numer. Anal., 40
  (2002), pp.~66--86, \url{https://doi.org/10.1137/S0036142901387580}.

\bibitem{Iapichino_2012aa}
{\sc L.~Iapichino}, {\em Reduced basis methods for the solution of parametrized
  PDEs in repetitive and complex networks with application to CFD}, PhD thesis,
  {\'E}cole Polytechnique F{\'e}d{\'e}rale de Lausanne, 2012.

\bibitem{Jackson_1998aa}
{\sc J.~D. Jackson}, {\em Classical Electrodynamics}, Wiley and Sons, New York,
  3rd~ed., 1998.

\bibitem{Lions-Magenes}
{\sc J.-L. Lions and E.~Magenes}, {\em Non-homogeneous boundary value problems
  and applications. {V}ol. {I}}, Springer-Verlag, New York-Heidelberg, 1972.
\newblock Translated from the French by P. Kenneth, Die Grundlehren der
  mathematischen Wissenschaften, Band 181.

\bibitem{Monk_2003aa}
{\sc P.~Monk}, {\em Finite Element Methods for {M}axwell's Equations}, Oxford
  University Press, Oxford, 2003.

\bibitem{Nedelec_1980aa}
{\sc J.-C. N{\'e}d{\'e}lec}, {\em Mixed finite elements in $\mathbb{R}^3$},
  Numerische Mathematik, 35 (1980), pp.~315--341.

\bibitem{Piegl_1997aa}
{\sc L.~Piegl and W.~Tiller}, {\em The {NURBS} Book}, Springer, 2~ed., 1997.

\bibitem{Wangler_2008aa}
{\sc T.~P. Wangler}, {\em RF Linear Accelerators}, Physics Textbook, Wiley,
  2008.

\end{thebibliography}
\end{document}